\documentclass{article}

    \PassOptionsToPackage{numbers, compress}{natbib}

 \usepackage[preprint]{neurips_2025}

\usepackage[utf8]{inputenc} %
\usepackage[T1]{fontenc}    %
\usepackage{hyperref}       %
\usepackage{url}            %
\usepackage{booktabs}       %
\usepackage{amsfonts}       %
\usepackage{nicefrac}       %
\usepackage{microtype}      %
\usepackage{xcolor}         %
\usepackage{graphicx}
\usepackage{amsmath, amssymb, amsthm}
\usepackage{caption}  
\usepackage{multirow}
\usepackage[table]{xcolor}
\usepackage{tikz}
\usepackage{enumitem}
\usetikzlibrary{matrix, positioning, arrows.meta, calc}

\theoremstyle{definition}
\newtheorem{definition}{Definition}[section]
\newtheorem{theorem}{Theorem}[section]
\newtheorem{lemma}{Lemma}[section]

\newcommand{\R}{\mathbb{R}}
\newcommand{\N}{\mathbb{N}}
\newcommand{\supp}{\text{supp}}
\newcommand{\vu}{\mathbf{u}}
\newcommand{\vv}{\mathbf{v}}
\newcommand{\vm}{\mathbf{m}}
\newcommand{\vt}{\mathbf{t}}
\newcommand{\vc}{\mathbf{c}}
\newcommand{\ve}{\mathbf{e}}
\newcommand{\vx}{\mathbf{x}}
\newcommand{\vy}{\mathbf{y}}
\newcommand{\vf}{\mathbf{f}}

\newcommand{\vz}{\mathbf{0}}

\newcommand{\vd}{\mathbf{d}}
\newcommand{\cU}{\mathcal{U}}
\newcommand{\cV}{\mathcal{V}}
\newcommand{\cQ}{\mathcal{Q}}
\newcommand{\cD}{\mathcal{D}}

\DeclareMathOperator{\sgn}{sgn}
\DeclareMathOperator*{\argmax}{arg\,max}

\title{Quantifying and Expanding the Theoretical Capacity of Late-Interaction Retrieval Models}

\author{%
  Julian Killingback\\
  Center for Intelligent Information Retrieval\\
  University of Massachusetts Amherst\\
  \texttt{jkillingback@cs.umass.edu}\\
  \And
  Varad Ingale\\
  Center for Intelligent Information Retrieval\\
  University of Massachusetts Amherst\\
  \texttt{vingale@umass.edu} \\
  \And
  Hamed Zamani\\
  Center for Intelligent Information Retrieval\\
  University of Massachusetts Amherst\\
  \texttt{zamani@cs.umass.edu} \\
  \And
  Cameron Musco\\
  University of Massachusetts Amherst\\
  \texttt{cmusco@cs.umass.edu} \\
}

\begin{document}

\maketitle

\begin{abstract}
Late-interaction retrieval models that use the MaxSim similarity function have shown strong empirical performance, often outperforming single-vector dense and sparse retrieval models. Despite these empirical findings, little is known about the \textit{theoretical} representation power of MaxSim and how it compares to other retrieval approaches. This paper shows by construction that MaxSim similarity can exactly replicate the inner product between any two non-negative $k$-sparse vectors with possibly infinite dimension, requiring only $O(k)$ representation space. Moreover, there exist similarities that MaxSim can express while standard vector inner products with the same representation space cannot. Leveraging our theoretical framework, we introduce \textit{Signed MaxSim} which allows late-interaction models to exactly replicate any real-valued inner product, something we prove standard MaxSim is not capable of. We also show that MaxSim can act as an aggregation of soft-OR operations and as an evaluator of logical expressions in positive Conjunctive Normal Form. Our findings show that MaxSim is at least as capable as standard vector inner products for any non-negative vectors and our extension, Signed MaxSim, is as capable for any vectors. Both similarities possess additional capabilities that inner product cannot replicate, marking one of the first theoretical justifications and quantifications of late-interaction methods. 
Our theoretical findings are supported empirically: on a retrieval task featuring queries with negations, Signed MaxSim improves out-of-domain performance significantly over a standard ColBERT/MaxSim baseline with nDCG@10 increasing from 0.597 to 1.000 under a vocabulary shift and from 0.008 to 0.788 on negation-only queries.
\end{abstract}

\section{Introduction}
The landscape of neural information retrieval is currently dominated by two primary modeling paradigms. The first, comprising both dense retrievers (e.g., DPR \citep{dpr}) and learned sparse retrievers (e.g., SNRM \citep{SNRM}), encodes queries and documents into single, fixed-dimensional vectors. Despite their differences in sparsity, both rely on a simple inner product to estimate relevance. The second paradigm, exemplified by late-interaction models like ColBERT \citep{colbert}, represents texts as sets of embeddings and estimates relevance with more complex similarity measures, most often MaxSim (Chamfer Similarity)--a sum of maximum similarities between query and document embeddings.

\begin{definition}[MaxSim Similarity]
\label{def:maxsim}
The MaxSim similarity $S: \cU \times \cV \to \R$, where $\cU, \cV \subset \R^{n}$, is defined as:
\[
       S(\mathcal{U}, \mathcal{V}) = \sum_{\mathbf{m} \in \mathcal{U}} \max_{\mathbf{t} \in \mathcal{V}} \langle\mathbf{m}, \mathbf{t} \rangle.
\]
Here, $ \langle \cdot, \cdot \rangle$ is the inner product.
\end{definition}

Generally speaking, late-interaction models have empirically demonstrated superior performance, particularly in out-of-domain settings \citep{dense_on_blog}, but the underlying mechanism for this gap has not been fully elucidated. Empirical work has shown that this gap is not explained by the additional representation space leveraged by late-interaction models \citep{colbert, embedding_dim_scaling_laws}, as providing more embedding dimensions to single-vector approaches has diminishing returns and does not match the performance of late-interaction models. These results suggest that it is the MaxSim similarity that is the main differentiator. In this work, we theoretically prove this hypothesis, showing that MaxSim similarity can exactly replicate the inner product between any non-negative vectors and that there are similarities that MaxSim can produce with a given representation size which an inner product cannot reproduce. These findings show that MaxSim is more capable than inner product in some cases and at least as capable as inner product when both vectors are non-negative. With that said, to show complete parity with inner-product-based retrieval methods, it must be possible to replicate inner products between any real-valued vectors, not just non-negative ones. We prove that this is impossible for standard MaxSim with a fixed embedding dimension and the number of embeddings tied to the support of the original vectors. However, this is not a fundamental limit of late-interaction retrievers, as we demonstrate with our extension to MaxSim, made possible by our theoretical framework, that enables the exact replication of the inner product between any real-valued vectors.

These findings explain why late-interaction models can outperform inner-product-based retrieval methods: they can leverage a more capable similarity function. To understand how this more powerful similarity function can be leveraged, we show how MaxSim relates to both sparse inner products and Boolean logic evaluation, two of the oldest building blocks in Information Retrieval. In detail, our three main findings are:

\textbf{1. MaxSim Subsumes Inner-Product Similarity.} We prove by construction that the MaxSim operation over sets of vectors in $\R^3$ can exactly reconstruct the inner product of any two non-negative vectors from a high-dimensional space. This result holds for both sparse and dense vectors. 
Specifically, the exact inner product between two $k$-sparse vectors can be achieved using exactly $k$ query embeddings and $k+1$ document embeddings in $\R^3$ representing each of the sparse vectors. This guarantees that late-interaction models inherently possess the full representational capacity of any standard non-negative single-vector retriever. This shows that MaxSim can replicate many existing retrieval approaches, such as learned-sparse approaches (e.g. SNRM \citep{SNRM} and SPLADE \citep{splade}) and traditional term matching methods (e.g. BM25 \citep{bm25}). However, the non-negative requirement makes it hard to compare with retrieval approaches, like dense retrieval, that use arbitrary real-valued vectors. Additionally, supporting negative values has clear utility for several query types. Negation queries, for example, naturally benefit from being able to directly reduce the score of documents that contain the negated concept. To address this limitation, we introduce an extension to the standard late-interaction representations and MaxSim similarity that allows for the exact reconstruction of inner products between arbitrary real-valued vectors (illustrated in Figure \ref{fig:maxsim_full_comparison_stacked}).  We also prove that, under mild assumptions, standard MaxSim cannot replicate this ability. Our modification is thus at least as capable as any inner-product-based or MaxSim-based retrieval method and enables better performance for certain categories of queries such as those with negations. This finding additionally relates MaxSim to sparse inner product, which have been core building blocks of information retrieval systems for decades.

\textbf{2. Separating MaxSim from Single-Vector Neural Retrieval.} We prove separation between the two paradigms regarding high-dimensional sparse spaces. We prove that no single finite-dimensional vector embedding can preserve the inner products of $k$-sparse vectors drawn from an arbitrarily high-dimensional ambient space. In contrast, MaxSim can achieve this exact preservation using sets of vectors in $\R^3$ with size $k$ by our previously discussed result. This mathematically formalizes why late-interaction models are uniquely suited for handling the ``long tail'' of vocabulary terms and rare entities that single-vector models inevitably compress or lose.

\textbf{3. Logical Expressivity.} Beyond comparing to inner product-based approaches, we analyze MaxSim as a way to evaluate logical expressions. We show that it can naturally function as an aggregation of Soft-ORs and that this allows MaxSim to act as a rank-equivalent Conjunctive Normal Form (i.e. an AND of ORs) evaluator for expressions without negations. This makes it ideal for matching synonyms and multiple forms of a relevant concept without overly rewarding the presence of multiple surface forms in a document. It also connects MaxSim to the classic Boolean search approaches used in early information retrieval systems and suggests that MaxSim may enable a neural approach which can function in a similar way to prior systems that required manual creation of the logical expressions used to query retrieval systems.

Together, our results suggest that the success of late-interaction models is not due only to the additional representation space, but also to a similarity measure  that is at least as expressive as the inner product, capable of infinite-dimensional sparse representation, and natively aligned with Boolean logic.

The rest of this paper is organized into five main sections. In Section \ref{sec:late_interaction_for_exact_inner_product}, we prove that MaxSim can exactly reproduce inner products between sparse non-negative vectors with possibly infinite dimension and show that our extension to MaxSim can exactly replicate the inner product of any sparse real-valued vectors. In Section \ref{sec:dim_bottleneck}, we prove that, unlike MaxSim, the inner product between two finite vectors cannot replicate an inner product between sparse infinite-dimensional vectors, thus demonstrating a separation between MaxSim and single-vector retrievers. These results show that late-interaction retrieval models are uniquely capable and can be viewed as a generalization of both dense and sparse single-vector retrieval models. Section \ref{sec:logical_evaluation} shows that MaxSim with larger embedding dimensions can be seen as an aggregation of weighted fuzzy OR operations and that by selecting specific values it can be viewed as evaluating positive Conjunctive Normal Form (CNF) expressions so that a ranked set of documents is ordered in an identical way to an exact Boolean evaluation. Finally, in Sections \ref{sec:experiments} and  \ref{sec:results}, we cover our experimental procedure and results comparing our MaxSim extension against standard MaxSim on queries with negations. We find that there is a significant improvement when evaluated in-domain and even greater improvement on out-of-domain data, which illustrates the additional robustness that is provided by our extension.

 \begin{figure}[htbp]
\centering
\resizebox{0.94\textwidth}{!}{
\begin{tikzpicture}[
    >=stealth,
    font=\sffamily,
    sectiontitle/.style={
        font=\sffamily\Large\bfseries,
        text=black!88,
        anchor=west
    },
    doclabel/.style={
        font=\sffamily\normalsize\bfseries,
        text=black!82,
        anchor=west
    },
    statusbadge/.style={
        font=\sffamily\scriptsize\bfseries,
        text=black!58,
        fill=black!3,
        draw=black!12,
        rounded corners=3pt,
        inner xsep=5pt,
        inner ysep=2.5pt
    },
    matrixname/.style={
        font=\sffamily\scriptsize\bfseries,
        text=violet!55!black,
        fill=violet!6,
        draw=violet!22,
        rounded corners=3pt,
        minimum width=3.0cm,
        minimum height=0.42cm,
        inner xsep=7pt,
        inner ysep=2.5pt,
        align=center
    },
    axislabel/.style={
        font=\sffamily\tiny\bfseries,
        text=black!52,
        fill=white,
        inner xsep=3pt,
        inner ysep=1.5pt,
        align=center
    },
    cell/.style={
        draw=black!18,
        fill=white,
        minimum width=1.08cm,
        minimum height=0.82cm,
        font=\sffamily\small,
        align=center,
        anchor=center
    },
    head/.style={
        cell,
        fill=black!5,
        text=black!72,
        font=\sffamily\small\bfseries,
        minimum height=0.92cm
    },
    rowhead/.style={
        cell,
        fill=black!5,
        text=black!72,
        font=\sffamily\small\bfseries,
        minimum width=2.25cm
    },
    corner/.style={
        cell,
        fill=black!2,
        minimum width=2.25cm,
        minimum height=0.92cm
    },
    magcell/.style={
        cell,
        minimum height=0.98cm
    },
    maghead/.style={
        head,
        minimum height=1.08cm
    },
    magrow/.style={
        rowhead,
        minimum height=0.98cm
    },
    magcorner/.style={
        corner,
        minimum height=1.08cm
    },
    signcell/.style={
        cell,
        minimum width=0.84cm,
        minimum height=0.98cm,
        font=\sffamily\small\bfseries
    },
    signhead/.style={
        signcell,
        fill=black!5,
        text=black!68,
        minimum height=1.08cm
    },
    signrow/.style={
        signcell,
        fill=black!5,
        text=black!68,
        minimum width=0.84cm,
        minimum height=0.98cm
    },
    signcorner/.style={
        signcell,
        fill=black!2,
        minimum width=0.84cm,
        minimum height=1.08cm
    },
    posmax/.style={
        fill=cyan!13,
        draw=cyan!55!blue,
        very thick
    },
    negmax/.style={
        fill=orange!18,
        draw=orange!85!red,
        very thick
    },
    goodmax/.style={
        fill=green!13,
        draw=green!55!cyan!70!black,
        very thick
    },
    route/.style={
        draw=violet!55!blue,
        fill=violet!7,
        text=violet!55!black,
        thick,
        rounded corners=5pt,
        minimum width=1.72cm,
        minimum height=0.72cm,
        inner xsep=6pt,
        font=\sffamily\small\bfseries,
        align=center
    },
    op/.style={
        draw=cyan!55!blue,
        fill=cyan!6,
        text=black!78,
        thick,
        rounded corners=5pt,
        minimum width=1.72cm,
        minimum height=0.72cm,
        inner xsep=7pt,
        font=\sffamily\small\bfseries,
        align=center
    },
    sumop/.style={
        draw=violet!55!blue,
        fill=violet!7,
        text=black!80,
        very thick,
        rounded corners=7pt,
        minimum width=1.45cm,
        minimum height=0.86cm,
        font=\sffamily\small\bfseries,
        align=center
    },
    result/.style={
        draw=black!76,
        fill=black!86,
        text=white,
        very thick,
        rounded corners=7pt,
        minimum width=1.18cm,
        minimum height=0.86cm,
        font=\sffamily\Large\bfseries,
        align=center
    },
    scorelabel/.style={
        font=\sffamily\tiny\bfseries,
        text=black!52,
        fill=white,
        inner xsep=3pt,
        inner ysep=1.5pt,
        align=center
    },
    flow/.style={
        ->,
        very thick,
        draw=black!46
    },
    selectflow/.style={
        ->,
        thick,
        draw=violet!55!blue
    },
    scoreflow/.style={
        ->,
        thick,
        draw=black!50
    },
    term/.style={
        font=\sffamily\scriptsize\bfseries,
        fill=white,
        text=black!68,
        inner xsep=3pt,
        inner ysep=1.5pt,
        rounded corners=2pt
    },
    sectionline/.style={
        draw=black!16,
        line width=0.9pt
    }
]

\def\MatrixLabels#1#2#3#4{
    \coordinate (#1-top-center) at
        ($(#1-1-1.north west)!0.5!(#1-1-4.north east)$);

    \coordinate (#1-column-center) at
        ($(#1-1-2.north west)!0.5!(#1-1-4.north east)$);

    \coordinate (#1-row-center) at
        ($(#1-2-1.west)!0.5!(#1-4-1.west)$);

    \node[
        matrixname,
        anchor=south
    ] at ($(#1-top-center)+(0,0.57)$) {
        #2
    };

    \node[
        axislabel,
        anchor=south
    ] at ($(#1-column-center)+(0,0.10)$) {
        #3
    };

    \node[
        axislabel,
        rotate=90
    ] at ($(#1-row-center)+(-0.38,0)$) {
        #4
    };
}

\node[sectiontitle] (titleA) at (0,0)
    {A. Standard MaxSim};

\node[statusbadge, right=0.35cm of titleA]
    {CANNOT DISTINGUISH THE DOCUMENTS};

\node[
    doclabel,
    below=0.62cm of titleA.west,
    anchor=west
] (doc1_std_lbl) {
    Doc 1
    \textcolor{orange!80!red}{(irrelevant)}:
    open-source AI tool that mentions ``Google''
};

\matrix (M1) [
    matrix of nodes,
    nodes in empty cells,
    row sep=-\pgflinewidth,
    column sep=-\pgflinewidth,
    below=1.04cm of doc1_std_lbl.south west,
    anchor=north west,
    ampersand replacement=\&
] {
    |[corner]| {}
        \& |[head]| {\shortstack{Open-\\source}}
        \& |[head]| AI
        \& |[head]| Google
        \\
    |[rowhead]| Open-source
        \& |[cell, posmax]| 0.8
        \& |[cell]| 0.2
        \& |[cell]| 0.1
        \\
    |[rowhead]| AI tool
        \& |[cell]| 0.2
        \& |[cell, posmax]| 0.9
        \& |[cell]| 0.3
        \\
    |[rowhead]| No Google
        \& |[cell, negmax]| 0.4
        \& |[cell]| 0.1
        \& |[cell]| -0.5
        \\
};

\MatrixLabels
    {M1}
    {Raw similarity}
    {DOCUMENT TOKENS}
    {QUERY TOKENS}

\node[op, right=1.55cm of M1-2-4] (max1_1)
    {Max\;0.8};

\node[op, right=1.55cm of M1-3-4] (max1_2)
    {Max\;0.9};

\node[
    op,
    draw=orange!75!red,
    fill=orange!7,
    right=1.55cm of M1-4-4
] (max1_3) {
    Max\;0.4
};

\draw[flow]
    (M1-2-4.east) -- (max1_1.west);

\draw[flow]
    (M1-3-4.east) -- (max1_2.west);

\draw[flow, draw=orange!75!red]
    (M1-4-4.east) -- (max1_3.west);

\node[sumop] (sum1) at (15.85,0 |- max1_2)
    {$\Sigma$};

\coordinate (res1pos) at (18.75,0 |- max1_2);

\node[result, anchor=east] (res1) at (res1pos)
    {2.1};

\node[scorelabel, anchor=south]
    at ($(res1.north)+(0,0.14)$)
    {FINAL SCORE};

\draw[scoreflow]
    (max1_1.east)
    to[out=0, in=180]
    (sum1.west);

\draw[scoreflow]
    (max1_2.east) -- (sum1.west);

\draw[scoreflow]
    (max1_3.east)
    to[out=0, in=180]
    (sum1.west);

\draw[scoreflow]
    (sum1.east) -- (res1.west);

\node[
    doclabel,
    below=0.88cm of M1.south west,
    anchor=north west
] (doc2_std_lbl) {
    Doc 2
    \textcolor{green!55!cyan!70!black}{(relevant)}:
    open-source AI toolkit with no Google mention
};

\matrix (M2) [
    matrix of nodes,
    nodes in empty cells,
    row sep=-\pgflinewidth,
    column sep=-\pgflinewidth,
    below=1.04cm of doc2_std_lbl.south west,
    anchor=north west,
    ampersand replacement=\&
] {
    |[corner]| {}
        \& |[head]| {\shortstack{Open-\\source}}
        \& |[head]| AI
        \& |[head]| Toolkit
        \\
    |[rowhead]| Open-source
        \& |[cell, posmax]| 0.8
        \& |[cell]| 0.2
        \& |[cell]| 0.4
        \\
    |[rowhead]| AI tool
        \& |[cell]| 0.2
        \& |[cell, posmax]| 0.9
        \& |[cell]| 0.5
        \\
    |[rowhead]| No Google
        \& |[cell, goodmax]| 0.4
        \& |[cell]| 0.1
        \& |[cell]| 0.2
        \\
};

\MatrixLabels
    {M2}
    {Raw similarity}
    {DOCUMENT TOKENS}
    {QUERY TOKENS}

\node[op, right=1.55cm of M2-2-4] (max2_1)
    {Max\;0.8};

\node[op, right=1.55cm of M2-3-4] (max2_2)
    {Max\;0.9};

\node[
    op,
    draw=green!55!cyan!70!black,
    fill=green!6,
    right=1.55cm of M2-4-4
] (max2_3) {
    Max\;0.4
};

\draw[flow]
    (M2-2-4.east) -- (max2_1.west);

\draw[flow]
    (M2-3-4.east) -- (max2_2.west);

\draw[flow, draw=green!55!cyan!70!black]
    (M2-4-4.east) -- (max2_3.west);

\node[sumop] (sum2) at (15.85,0 |- max2_2)
    {$\Sigma$};

\coordinate (res2pos) at (18.75,0 |- max2_2);

\node[result, anchor=east] (res2) at (res2pos)
    {2.1};

\draw[scoreflow]
    (max2_1.east)
    to[out=0, in=180]
    (sum2.west);

\draw[scoreflow]
    (max2_2.east) -- (sum2.west);

\draw[scoreflow]
    (max2_3.east)
    to[out=0, in=180]
    (sum2.west);

\draw[scoreflow]
    (sum2.east) -- (res2.west);

\draw[sectionline]
    ($(M2.south west)+(0,-1.05)$)
    --
    ($(res2.east |- M2.south west)+(0,-1.05)$);

\node[
    sectiontitle,
    below=1.82cm of M2.south west,
    anchor=north west
] (titleB) {
    B. Signed MaxSim $S_{\pm}$
};

\node[statusbadge, right=0.35cm of titleB]
    {SEPARATES ROUTING FROM SIGN};

\node[
    doclabel,
    below=0.62cm of titleB.west,
    anchor=west
] (doc1_sgn_lbl) {
    Doc 1
    \textcolor{orange!80!red}{(irrelevant)}:
    open-source AI tool that mentions ``Google''
};

\matrix (Mag1) [
    matrix of nodes,
    nodes in empty cells,
    row sep=-\pgflinewidth,
    column sep=-\pgflinewidth,
    below=1.04cm of doc1_sgn_lbl.south west,
    anchor=north west,
    ampersand replacement=\&
] {
    |[magcorner]| {}
        \& |[maghead]| {\shortstack{\strut Open-source\\\strut $(+1)$}}
        \& |[maghead]| {\shortstack{\strut AI\\\strut $(+1)$}}
        \& |[maghead]| {\shortstack{\strut Google\\\strut $(+1)$}}
        \\
    |[magrow]| {\shortstack{\strut Open-source\\\strut $(+1)$}}
        \& |[magcell, posmax]| 0.8
        \& |[magcell]| 0.2
        \& |[magcell]| 0.1
        \\
    |[magrow]| {\shortstack{\strut AI tool\\\strut $(+1)$}}
        \& |[magcell]| 0.2
        \& |[magcell, posmax]| 0.9
        \& |[magcell]| 0.3
        \\
    |[magrow]| {\shortstack{\strut Google\\\strut $(-1)$}}
        \& |[magcell]| 0.0
        \& |[magcell]| 0.1
        \& |[magcell, negmax]| 0.7
        \\
};

\MatrixLabels
    {Mag1}
    {Magnitude similarity}
    {DOCUMENT TOKENS}
    {QUERY TOKENS}

\node[route, right=0.8cm of Mag1-2-4] (route1_1)
    {Select 1};

\node[route, right=0.8cm of Mag1-3-4] (route1_2)
    {Select 2};

\node[route, right=0.8cm of Mag1-4-4] (route1_3)
    {Select 3};

\draw[selectflow]
    (Mag1-2-4.east) -- (route1_1.west);

\draw[selectflow]
    (Mag1-3-4.east) -- (route1_2.west);

\draw[selectflow, draw=orange!75!red]
    (Mag1-4-4.east) -- (route1_3.west);

\matrix (Sign1) [
    matrix of nodes,
    nodes in empty cells,
    row sep=-\pgflinewidth,
    column sep=-\pgflinewidth,
    right=2.8cm of Mag1,
    anchor=west,
    ampersand replacement=\&
] {
    |[signcorner]| {}
        \& |[signhead]| $+1$
        \& |[signhead]| $+1$
        \& |[signhead]| $+1$
        \\
    |[signrow]| $+1$
        \& |[signcell, posmax]| $+1$
        \& |[signcell]| $+1$
        \& |[signcell]| $+1$
        \\
    |[signrow]| $+1$
        \& |[signcell]| $+1$
        \& |[signcell, posmax]| $+1$
        \& |[signcell]| $+1$
        \\
    |[signrow]| $-1$
        \& |[signcell]| $-1$
        \& |[signcell]| $-1$
        \& |[signcell, negmax]| $-1$
        \\
};

\MatrixLabels
    {Sign1}
    {Sign similarity}
    {DOCUMENT SIGNS}
    {QUERY SIGNS}

\draw[selectflow, dashed]
    (route1_1.east) -- (Sign1-2-1.west);

\draw[selectflow, dashed]
    (route1_2.east) -- (Sign1-3-1.west);

\draw[
    selectflow,
    dashed,
    draw=orange!75!red
] (route1_3.east) -- (Sign1-4-1.west);

\node[sumop] (sum3) at (16.05,0 |- Sign1-3-4)
    {$\Sigma$};

\coordinate (res3pos) at (18.75,0 |- Sign1-3-4);

\node[result, anchor=east] (res3) at (res3pos)
    {1.0};

\node[scorelabel, anchor=south]
    at ($(res3.north)+(0,0.14)$)
    {FINAL SCORE};

\draw[scoreflow]
    (Sign1-2-4.east)
    to[out=0, in=180]
    node[
        term,
        pos=0.48,
        above=2pt
    ] {$0.8(+1)$}
    (sum3.west);

\draw[scoreflow]
    (Sign1-3-4.east)
    --
    node[
        term,
        pos=0.48
    ] {$0.9(+1)$}
    (sum3.west);

\draw[
    scoreflow,
    draw=orange!75!red
]
    (Sign1-4-4.east)
    to[out=0, in=180]
    node[
        term,
        pos=0.48,
        below=2pt
    ] {$0.7(-1)$}
    (sum3.west);

\draw[scoreflow]
    (sum3.east) -- (res3.west);

\node[
    doclabel,
    below=0.98cm of Mag1.south west,
    anchor=north west
] (doc2_sgn_lbl) {
    Doc 2
    \textcolor{green!55!cyan!70!black}{(relevant)}:
    open-source AI toolkit with no Google mention
};

\matrix (Mag2) [
    matrix of nodes,
    nodes in empty cells,
    row sep=-\pgflinewidth,
    column sep=-\pgflinewidth,
    below=1.04cm of doc2_sgn_lbl.south west,
    anchor=north west,
    ampersand replacement=\&
] {
    |[magcorner]| {}
        \& |[maghead]| {\shortstack{\strut Open-source\\\strut $(+1)$}}
        \& |[maghead]| {\shortstack{\strut AI\\\strut $(+1)$}}
        \& |[maghead]| {\shortstack{\strut Toolkit\\\strut $(+1)$}}
        \\
    |[magrow]| {\shortstack{\strut Open-source\\\strut $(+1)$}}
        \& |[magcell, posmax]| 0.8
        \& |[magcell]| 0.2
        \& |[magcell]| 0.4
        \\
    |[magrow]| {\shortstack{\strut AI tool\\\strut $(+1)$}}
        \& |[magcell]| 0.2
        \& |[magcell, posmax]| 0.9
        \& |[magcell]| 0.5
        \\
    |[magrow]| {\shortstack{\strut Google\\\strut $(-1)$}}
        \& |[magcell]| 0.0
        \& |[magcell]| 0.1
        \& |[magcell, goodmax]| 0.2
        \\
};

\MatrixLabels
    {Mag2}
    {Magnitude similarity}
    {DOCUMENT TOKENS}
    {QUERY TOKENS}

\node[route, right=0.8cm of Mag2-2-4] (route2_1)
    {Select 1};

\node[route, right=0.8cm of Mag2-3-4] (route2_2)
    {Select 2};

\node[
    route,
    draw=green!55!cyan!70!black,
    fill=green!6,
    text=green!40!black,
    right=0.8cm of Mag2-4-4
] (route2_3) {
    Select 3
};

\draw[selectflow]
    (Mag2-2-4.east) -- (route2_1.west);

\draw[selectflow]
    (Mag2-3-4.east) -- (route2_2.west);

\draw[
    selectflow,
    draw=green!55!cyan!70!black
] (Mag2-4-4.east) -- (route2_3.west);

\matrix (Sign2) [
    matrix of nodes,
    nodes in empty cells,
    row sep=-\pgflinewidth,
    column sep=-\pgflinewidth,
    right=2.8cm of Mag2,
    anchor=west,
    ampersand replacement=\&
] {
    |[signcorner]| {}
        \& |[signhead]| $+1$
        \& |[signhead]| $+1$
        \& |[signhead]| $+1$
        \\
    |[signrow]| $+1$
        \& |[signcell, posmax]| $+1$
        \& |[signcell]| $+1$
        \& |[signcell]| $+1$
        \\
    |[signrow]| $+1$
        \& |[signcell]| $+1$
        \& |[signcell, posmax]| $+1$
        \& |[signcell]| $+1$
        \\
    |[signrow]| $-1$
        \& |[signcell]| $-1$
        \& |[signcell]| $-1$
        \& |[signcell, goodmax]| $-1$
        \\
};

\MatrixLabels
    {Sign2}
    {Sign similarity}
    {DOCUMENT SIGNS}
    {QUERY SIGNS}

\draw[selectflow, dashed]
    (route2_1.east) -- (Sign2-2-1.west);

\draw[selectflow, dashed]
    (route2_2.east) -- (Sign2-3-1.west);

\draw[
    selectflow,
    dashed,
    draw=green!55!cyan!70!black
] (route2_3.east) -- (Sign2-4-1.west);

\node[sumop] (sum4) at (16.05,0 |- Sign2-3-4)
    {$\Sigma$};

\coordinate (res4pos) at (18.75,0 |- Sign2-3-4);

\node[result, anchor=east] (res4) at (res4pos)
    {1.5};

\draw[scoreflow]
    (Sign2-2-4.east)
    to[out=0, in=180]
    node[
        term,
        pos=0.48,
        above=2pt
    ] {$0.8(+1)$}
    (sum4.west);

\draw[scoreflow]
    (Sign2-3-4.east)
    --
    node[
        term,
        pos=0.48
    ] {$0.9(+1)$}
    (sum4.west);

\draw[
    scoreflow,
    draw=green!55!cyan!70!black
]
    (Sign2-4-4.east)
    to[out=0, in=180]
    node[
        term,
        pos=0.48,
        below=2pt
    ] {$0.2(-1)$}
    (sum4.west);

\draw[scoreflow]
    (sum4.east) -- (res4.west);

\end{tikzpicture}
}
\caption{
Comparison of scoring for the query
\textit{``Open-source AI tools that do not mention Google.''}
Both documents satisfy the positive request for an open-source AI tool,
but only Doc~2 satisfies the exclusion constraint.
\textbf{(A)} Standard MaxSim selects the largest raw similarity for each
query token. Although the similarity to the explicit ``Google'' token
can be negative, the maximum operation instead selects the incidental
similarity of $0.4$ to ``Open-source'' for both documents. The two
documents therefore receive the same final score, so standard MaxSim
cannot use the exclusion constraint to distinguish the relevant document.
\textbf{(B)} Signed MaxSim separately uses magnitude similarity to identify
the matching feature and sign similarity to determine its contribution.
The explicit occurrence of ``Google'' therefore produces a large negative
contribution, while the document with no Google mention receives only a
small incidental penalty and is correctly ranked first.
}
\label{fig:maxsim_full_comparison_stacked}
\end{figure}

\section{Related Work}
In this section, we explore prior work related to late-interaction retrieval models and approaches to theoretically understand the capabilities of retrieval models.

\subsection{Late-Interaction Retrieval Models}
Late-interaction models have become a prominent class of retrieval models thanks to their strong empirical performance. The category was created with the introduction of ColBERT \citep{colbert}, which uses the MaxSim similarity (Chamfer similarity) to find the similarity between queries and documents. Although the empirical results are strong, the necessity to store a vector for each document token resulted in large index sizes and slow retrieval. To address this, follow-up work has investigated several methods to make retrieval more tractable and reduce the total space consumed by the embeddings. Methods to reduce representation size include using clustering and residual encoding \citep{colbert_v2, plaid_late_interaction}, employing a learned per-token pruning mechanism \citep{aligner_late_interaction, colberter}, clustering document tokens to decrease redundancy \citep{clustering_multi_vector_representations}, and using a constant number of document embeddings \citep{constant_space_multi_vector_retrieval, unified_model_late_interaction}. The most relevant to our work are methods that compress the set of query and document embeddings into a single embedding such that an inner product is an approximation of MaxSim. MUVERA achieves this by using Locality Sensitive Hashing (LSH) and provides a theoretical bound on the error between the inner product of the constructed vectors and the original MaxSim similarity \citep{muvera_multi_vector_retrieval_single_vector}. LEMUR uses a data-aware approach which learns a mapping from the multi-vector representations to single-vector ones in a way that minimizes the difference between the MaxSim similarity and inner product \citep{lemur_learned_multi_vector_retrieval}.

\subsection{Theoretical Properties of Retrieval Models}
Understanding the capabilities and limitations of retrieval models is crucial to improving retrieval effectiveness and building reliable search systems. Prior work exploring these theoretical properties has largely investigated the capacity of dense retrieval models which use a single embedding (i.e. vector) to represent queries and documents, with the similarity generally computed using an inner product. \citet{sparse_dense_attentional_representations} provide a bound on the number of pair-wise errors produced by compressing a vocabulary vector into a smaller dimension. They show that, for binary vectors, the size of the embedding dimension needed to guarantee a specific error bound grows linearly with the largest number of unique terms in a document. They show that by using several vectors with smaller embedding dimension the error rate can remain the same while the embedding dimension decreases. \citet{defense_dual_encoders_ranking} show that the inner product between two embeddings can approximate an arbitrary continuous scoring function given that the dimension of the embeddings is countably infinite. \citet{lite_learned_late_interaction} show that when the output embedding dimension is smaller than the total size of the input to a transformer encoder then any approximation of a scoring function will have some irreducible error. They also show that their learned late-interaction similarity is a universal function approximation. \citet{hypencoder_hypernetworks_retrieval} shows that for any $d+1$ document vectors with dimension $d$ there is no query vector that can perfectly separate some relevance pattern using inner product. \citet{theoretical_limitations_embedding_retrieval} proved that the number of orderings that a set of document and query embeddings can express such that there is a certain margin between relevant and irrelevant documents is dependent on the embedding dimension and number of documents. \citet{on_strengths_and_limitations_of_single_vector_embeddings} observe that the results by \citet{sign_rank_versus_vc_dim} can be used to show that $2k + 1$ dimensions are necessary for any top-$k$ ranking regardless of the number of documents when using single vectors and inner products to assess similarity.
While the theoretical limits of single-vector dense retrieval have been increasingly scrutinized, the expressivity of multi-vector models remains largely unexplored. 

Concurrent work by \citet{multi_vector_embeddings_are_provably_more_expressive} studies whether MaxSim similarities produced by sets of at most $m$ unit vectors in $\mathbb{R}^d$ can be approximated by single-vector inner products in $\mathbb{R}^D$. They construct a finite family for which achieving an error $\epsilon$ requires $D=(\epsilon^2m)^{\Omega(1/\epsilon)}$, even when the single-vector representations may be chosen after observing the complete dataset and each query contains only one vector. This establishes a strong separation in one direction: arbitrary multi-vector MaxSim similarities cannot generally be compressed into comparably sized single-vector representations. This finding aligns with our own finding that single-vector finite-dimensional inner products cannot simulate arbitrary-dimensional sparse inner products. Since we show that such sparse inner products can be simulated by MaxSim, this implies that single-vector embeddings cannot simulate multi-vector embeddings of comparable dimension -- giving an analogous result to \citet{multi_vector_embeddings_are_provably_more_expressive}, but in the exact rather than approximate setting.  Unlike our work, \citet{multi_vector_embeddings_are_provably_more_expressive} does not address the reverse question: which single-vector similarities standard MaxSim can or cannot realize. %
They also do not propose extensions of MaxSim or 
demonstrate empirical retrieval gains resulting from their theoretical analysis.

\section{Late-Interaction Similarity for Exact Inner Product Computations}
\label{sec:late_interaction_for_exact_inner_product}

This section contains the proof that MaxSim similarity can exactly replicate the inner product between non-negative vectors of arbitrary dimension (possibly infinite) with representation space tied to the number of non-zero elements. We additionally show that MaxSim can be extended to enable the exact inner product replication of any real-valued vectors.

\subsection{Exact Non-Negative Inner Product Computations with MaxSim Similarity}
\label{sec:max_sim_positive_inner_product}

As our results depend on the support and sparsity of vectors, we formally define these terms.

\begin{definition}[Support and Sparsity]
The \textbf{support} of a vector $\vx$ is the set of indices corresponding to its non-zero elements, defined as $\supp(\vx) := \{i \in \N \mid x_i \neq 0\}$. A vector $\vx$ is said to be \textbf{$k$-sparse} if $|\supp(\vx)| \leq k$, i.e., the cardinality of its support is at most $k$.
\end{definition}

We now state the main result of this section: MaxSim similarity can exactly replicate the inner product between two non-negative vectors $\vu, \vv \in \R^n_{\geq0}$, where $n$ can be countably infinite, and
the size of the vector sets $\cU, \cV \subset \R^{3}$ representing the original vectors depends only on the number of non-zero elements in $\vu$ and $\vv$. Importantly, the construction of $\cU$ and $\cV$ can be done independently (i.e. $\cU$ has no knowledge of $\cV$ and vice versa), making the finding directly applicable to realistic settings such as document retrieval.

This result indicates that MaxSim is as capable as an inner product between non-negative vectors and additionally can represent an infinite-dimension non-negative sparse vector with representation space only tied to the non-zero elements. We show in Section \ref{sec:dim_bottleneck} that this is impossible for a standard vector inner product to accomplish, making MaxSim more capable in this regard.

\begin{theorem}
\label{thm:main_maxsim}
Let $\vu, \vv \in \R^n_{\geq0}$ be  non-negative vectors, where $\vu$ is $k_u$-sparse and $\vv$ is $k_v$-sparse. There exist sets of vectors $\cU, \cV \subset \R^3$, with $|\cU| = k_u$ and $|\cV| = k_v + 1$, such that:
\begin{equation}
    \langle \vu, \vv \rangle = S(\mathcal{U}, \mathcal{V}),
\end{equation}
Furthermore, the sets $\cU$ and $\cV$ can be constructed independently; the mapping of $\vu$ to $\cU$ depends strictly on $\vu$, and the mapping of $\vv$ to $\cV$ depends strictly on $\vv$, preserving the asymmetric dual-encoder architecture standard in retrieval.
\end{theorem}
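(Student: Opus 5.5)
The plan is an explicit construction. Each coordinate index $i\in\N$ is embedded as a point on the curve $i\mapsto(1,i,i^2)\in\R^3$. Each document vector is chosen so that its inner product with these lifted query points is a concave quadratic in $i$. That quadratic peaks exactly at its own index and is non-positive at every other integer. Because the indices are integers, distinct indices are separated by at least $1$. This uniform gap is what lets one fixed curvature work even though $n$ may be countably infinite. It also removes any need for $\cU$ to know anything about $\cV$, which would be needed if we used, say, arbitrarily close angles on a circle.

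Concretely, for each $i\in\supp(\vu)$ I set the query vector
\[
\vm_i = u_i\,(1,\; i,\; i^2),
\]
so $|\cU|=|\supp(\vu)|$. For each $j\in\supp(\vv)$ I set
\[
\vt_j = v_j\,(1-j^2,\; 2j,\; -1).
\]
I add one extra document vector $\vt_0=\vz$, giving $|\cV|=|\supp(\vv)|+1$. If the supports are strictly smaller than $k_u$ or $k_v$, I pad both sets with copies of $\vz$. A zero query vector contributes $\max_{\vt}\langle \vz,\vt\rangle=0$, and extra zero document vectors leave every maximum unchanged, because $\vt_0$ is already present. Clearly $\cU$ depends only on $\vu$ and $\cV$ depends only on $\vv$. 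The key identity is the routine computation
\[
\langle \vm_i,\vt_j\rangle = u_i v_j\bigl(1-j^2+2ij-i^2\bigr) = u_i v_j\bigl(1-(i-j)^2\bigr).
\]

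The next step is to evaluate $\max_{\vt\in\cV}\langle \vm_i,\vt\rangle$ for each $i\in\supp(\vu)$, using $u_i>0$ and $v_j>0$ on the support.
\begin{itemize}
\item For $j=i$ the inner product equals $u_iv_i$.
\item For integers $j\neq i$ we have $(i-j)^2\ge 1$, so the inner product is $\le 0$.
\item The zero vector gives exactly $0$.
\end{itemize}
Hence the maximum is $u_iv_i$ when $i\in\supp(\vv)$, and $0$ otherwise. In both cases it equals $u_iv_i$. Summing over $i\in\supp(\vu)$ gives $S(\cU,\cV)=\sum_i u_iv_i=\langle\vu,\vv\rangle$.

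The only real obstacle is the one addressed by the choice of lifting: arranging that "wrong" document vectors can never beat the correct one or the zero baseline. This must hold without a data-dependent scale, and with possibly infinitely many candidate indices. A circle-based angular encoding fails exactly here. The integer parabola with curvature proportional to $v_j$ resolves it, and the verification above is then immediate.
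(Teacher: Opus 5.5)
Your construction is essentially the paper's. Both use queries $u_i\phi(i)$ with $\phi(i)=(1,i,i^2)^T$, document vectors carrying the coefficients of the downward parabola $v_j - C(x-j)^2$, and the zero vector. The only change is that you take $C=v_j$, so $\vt_j=v_j(1-j^2,2j,-1)$ is linear in $v_j$ and off-peak values are only $\le 0$, which suffices because of $\vz$. The paper instead takes $C>v_j$ to get strict negativity, and it later reuses this to make the argmax unique in Signed MaxSim.

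One small slip concerns the padding. Since $\cU$ and $\cV$ are sets, copies of $\vz$ collapse, and $\vz$ is already in $\cV$, so padding with zeros cannot raise the cardinalities to $k_u$ and $k_v+1$. You can instead pad with distinct dummies such as $(0,0,c)$ in $\cU$ and $(-c,0,0)$ in $\cV$ for distinct $c>0$; these never beat the zero baseline. The paper simply sidesteps this point by implicitly taking $k_u,k_v$ to be the exact support sizes.
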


We transform the sparse vectors $\vu$ and $\vv$ into sets of dense 3-dimensional vectors, denoted $\cU$ and $\cV$. The key intuition is that we can use the inner product between two vectors in $\R^3$ to represent evaluating a specific quadratic polynomial at a given point. By creating the polynomial so that when evaluated at a non-matching index the value is less than $0$ and at a matching index is a predefined value, the maximum in the similarity only allows matching indices to contribute to the final similarity.

\begin{lemma}[Polynomial Construction]
\label{lemma:poly}
For any $d \in \N$ and $w \in \R_{\ge 0}$, there exists a coefficient vector $c(d, w) =(c_0, c_1, c_2)^T \in \R^3$ defining a quadratic polynomial $p(x) = \vc^T \phi(x) = c_0 + c_1 x + c_2 x^2$ such that:
\begin{enumerate}
    \item $p(d) = w$
    \item $p(d') < 0$ for all $d' \in \N$ where $d' \neq d$.
\end{enumerate}
\end{lemma}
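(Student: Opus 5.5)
The plan is to use a downward parabola centered at $d$ and shift it vertically so that it equals $w$ at $d$ but is already negative one unit away. Because $d' \in \N$ with $d' \neq d$ forces $|d'-d| \ge 1$, one fixed quadratic handles every non-matching index at once.

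Concretely, set
\[
p(x) = (w+1) - (w+1)(x-d)^2 .
\]
Expanding in the monomial basis $\phi(x) = (1, x, x^2)^T$ gives
\[
c_0 = (w+1)(1-d^2), \qquad c_1 = 2(w+1)d, \qquad c_2 = -(w+1).
\]
These coefficients depend only on $d$ and $w$, so $c(d,w)$ is well defined, and $p(x) = \vc^T\phi(x)$ holds by construction.

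The first property is immediate: $p(d) = (w+1) - 0 = w$.

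For the second property, take $d' \in \N$ with $d' \neq d$. Since $d$ and $d'$ are distinct integers, $(d'-d)^2 \ge 1$. Because $w+1 > 0$, this gives
\[
p(d') = (w+1)\bigl(1 - (d'-d)^2\bigr) \le 0 .
\]

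This bound is the only real obstacle: the naive parabola gives only $\le 0$, with equality at $d' = d \pm 1$, whereas the lemma demands a strict inequality. The fix is to give the negative part a larger scale than the positive part. Define instead
\[
p(x) = w - (w+1)(x-d)^2 .
\]
Then $p(d) = w$ still holds. For $d' \neq d$,
\[
p(d') \le w - (w+1) = -1 < 0 .
\]
The coefficients become
\[
c_0 = w - (w+1)d^2, \qquad c_1 = 2(w+1)d, \qquad c_2 = -(w+1),
\]
which again are explicit functions of $(d,w)$ alone. The argument uses $w \ge 0$ only in that it makes $w+1 > 0$ a valid positive scale for the penalty term.

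The uniform margin $p(d') \le -1$ is stronger than the lemma requires, and it is worth recording because it is what later ensures the $\max$ in $S$ never selects a non-matching index.
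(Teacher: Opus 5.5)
Your final construction $p(x) = w - (w+1)(x-d)^2$ is exactly the paper's proof with the choice $C = w+1$ (the paper allows any $C > w$), and your coefficients and the uniform bound $p(d') \le -1$ are correct. One small slip is in the discarded first attempt: there $p(d) = w+1$, not $w$, so that version fails property 1 as well as strictness, but this does not affect your final argument.
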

\begin{proof}
Consider the polynomial $p(x) = w - C(x-d)^2$ for some constant $C > 0$.
By construction, $p(d) = w - C(d-d)^2 = w$.
For any other index $d' \in \N$ where $d' \neq d$, the term $(d'-d)^2 \ge 1$.
Thus, $p(d') = w - C(d'-d)^2 \le w - C$.
To ensure $p(d') < 0$, we must choose $C$ such that $w - C < 0$, which implies $C > w$. Any such choice of $C$ satisfies the conditions. For example, let $C = w + 1$.
The polynomial can be expanded to find the coefficients:
\begin{align*}
    p(x) &= w - C(x^2 - 2dx + d^2) \\
         &= (-C)x^2 + (2Cd)x + (w - Cd^2).
\end{align*}
The corresponding coefficient vector is $c(d, w) = (w - Cd^2, 2Cd, -C)^T$. 
\end{proof}

The polynomial construction creates a function that evaluates to a given value at a certain point and negative elsewhere. We now  construct the complementary embedding that represents a specific point to evaluate. Previous work has used a similar approach to exactly factorize sparse matrices \citep{graph_exact_represenations, exact_representation_of_sparse_networks}.
\begin{definition}[Embedding Map]
Define the quadratic embedding map $\phi: \N \to \R^3$ as:
\begin{equation}
    \phi(d) = \begin{pmatrix} 1 \\ d \\ d^2 \end{pmatrix}.
\end{equation}
\end{definition}

We now prove the main theorem of this section.
\begin{proof}[Proof of Theorem \ref{thm:main_maxsim}]
We construct the sets $\cU$ and $\cV$ as follows. For the vector $\vu$, we map each non-zero entry $u_i$ to a vector in $\R^3$ using the embedding map $\phi$:
\begin{equation}
    \cU = \left\{u_i \phi(i) \mid i \in \supp(\vu) \right\}.
\end{equation}
For the vector $\vv$, we utilize the polynomial coefficients from Lemma \ref{lemma:poly}. We define:
\begin{equation}
    \cV = \left\{c(i, v_i)\mid i \in \supp(\vv) \right\} \cup \{ \vz \},
\end{equation}
where $\vz \in \R^3$ is the zero vector.

We now show that $S(\cU, \cV) = \langle \vu, \vv \rangle$. By definition of the standard inner product, $\langle \vu, \vv \rangle = \sum_{i \in \supp(\vu)} u_i v_i$. Expanding the Max-Sim similarity for our constructed sets, we have:
\begin{equation}
    S(\mathcal{U}, \mathcal{V}) = \sum_{\mathbf{m} \in \mathcal{U}} \max_{\mathbf{t} \in \mathcal{V}} \langle\mathbf{m}, \mathbf{t} \rangle = \sum_{i \in \supp(\vu)} \max_{\mathbf{t} \in \mathcal{V}} \langle u_i \phi(i), \mathbf{t} \rangle.
\end{equation}
To prove equivalence, it suffices to show that for every $i \in \supp(\vu)$, the inner term evaluates to $u_i v_i$. We analyze two cases based on the value of $v_i$:

\textbf{Case 1: $v_i = 0$ (i.e., $i \notin \supp(\vv)$).} \\
By Lemma \ref{lemma:poly}, for any $c(j, v_j) \in \cV$ (where $j \neq i$), the corresponding polynomial evaluates to a strictly negative value at $i$. Thus, $\langle u_i \phi(i), c(j, v_j) \rangle = u_i p(i) < 0$ (since $u_i > 0$). However, the zero vector $\vz \in \cV$ yields $\langle u_i \phi(i), \vz \rangle = 0$. Therefore, the maximum over $\cV$ is achieved by $\vz$, yielding $0$. This matches $u_i v_i = 0$.

\textbf{Case 2: $v_i > 0$ (i.e., $i \in \supp(\vv)$).} \\
By construction, $\cV$ contains the coefficient vector $c(i, v_i)$. By Lemma \ref{lemma:poly}, the inner product with this specific vector is $\langle u_i \phi(i), c(i, v_i) \rangle = u_i p(i) = u_i v_i$. For all other vectors $c(j, v_j) \in \cV$ ($j \neq i$), Lemma \ref{lemma:poly} guarantees $p(i) < 0$, yielding a negative inner product. The zero vector yields $0$. Since $u_i, v_i > 0$, the term $u_i v_i$ is strictly positive and is therefore the maximum over all $\mathbf{t} \in \cV$.

In both cases, $\max_{\mathbf{t} \in \mathcal{V}} \langle u_i \phi(i), \mathbf{t} \rangle = u_i v_i$. Summing over all $i \in \supp(\vu)$ yields the exact inner product, concluding the proof.
\end{proof}

\subsection{Standard MaxSim Cannot Exactly Replicate Signed Inner Products}
\label{sec:standard_maxsim_signed_limitation}

 We now ask whether the restriction that both vectors must be non-negative is necessary or whether a similarly sparsity-preserving representation can recover arbitrary signed inner products without modifying the MaxSim operation.

We allow each query or document embedding complete information on the input vector and allow the document representation to include an additional set of embeddings that are shared for all document representations similar to the $\vz$ vector in Section \ref{sec:max_sim_positive_inner_product}. The only limitation is that the number of query and document embeddings (not including those shared for all documents) must be equal to the cardinalities of the supports of the original query and document vectors. Even with these minimal constraints and a high degree of flexibility, standard MaxSim cannot recover all signed inner products in a fixed embedding dimension.

\begin{definition}[Contextual Sparsity-Preserving Encoding]
\label{def:shared_augmented_encoding}
Let $M\in\N$ be fixed. A contextual sparsity-preserving encoding consists of independently constructed mappings
\[
\cU:\R^n\rightarrow 2^{\R^M},
\qquad
\cV_{\mathrm{var}}:\R^n\rightarrow 2^{\R^M},
\]
together with a fixed finite set $\mathcal F\subset\R^M$ shared by every document representation. Here, $2^{\R^M}$ represents all possible subsets with elements from $\R^M$. The mappings satisfy
\[
|\cU(\vu)|=|\supp(\vu)|,
\qquad
|\cV_{\mathrm{var}}(\vv)|=|\supp(\vv)|,
\]
and the complete document representation is
\[
\cV(\vv)=\cV_{\mathrm{var}}(\vv)\cup\mathcal F.
\]
Each embedding in $\cU(\vu)$ and $\cV_{\mathrm{var}}(\vv)$ may depend arbitrarily on the complete corresponding input vector. The only restriction is that the encoding contains exactly one input-dependent embedding per nonzero coordinate, while the vectors in $\mathcal F$ are fixed and identical across documents. This definition fits the prior construction for non-negative vectors when $\mathcal{F} = \{\vz\}$. Additionally, in the non-negative construction the embeddings are constructed with only coordinate-wise information. 
\end{definition}

\begin{theorem}[Sparsity-Preserving Limitation of Standard MaxSim]
\label{thm:standard_maxsim_signed_limit}
Let $M,n\in\N$, let $\mathcal F$, $\cU$, and $\cV_{\mathrm{var}}$ satisfy Definition \ref{def:shared_augmented_encoding}. If
\[
S\bigl(\cU(\vu),\cV(\vv)\bigr)=\langle\vu,\vv\rangle
\]
for every pair of one-sparse vectors $\vu,\vv\in\R^n$, then $n\le M$. Consequently, when $n>M$, some pair of one-sparse vectors cannot have its inner product exactly recovered, regardless of the  encoders or shared vectors used.
\end{theorem}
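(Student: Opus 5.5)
The plan is to restrict attention to query vectors of the form $\ve_i$ (the $i$-th standard basis vector, scaled by $+1$) and to document vectors of the form $\pm\ve_j$. From the exact-recovery hypothesis we will extract an $n\times n$ identity matrix that factors through $\R^M$, which forces $n\le M$. Since $\vu=\ve_i$ is one-sparse, Definition~\ref{def:shared_augmented_encoding} gives $|\cU(\ve_i)|=1$. Write $\cU(\ve_i)=\{\mathbf q_i\}$, and for $b\neq 0$ write $\cV_{\mathrm{var}}(b\ve_j)=\{\mathbf d_{j,b}\}$. MaxSim then collapses to
\[
S\bigl(\cU(\ve_i),\cV(b\ve_j)\bigr)=\max\Bigl(\langle \mathbf q_i,\mathbf d_{j,b}\rangle,\ g_i\Bigr),\qquad g_i:=\max_{\mathbf f\in\mathcal F}\langle \mathbf q_i,\mathbf f\rangle .
\]
If $\mathcal F$ is empty, we set $g_i=-\infty$.

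\textbf{Step 1: the shared vectors are strictly negative against every query.} Take $\vv=-\ve_i$. The hypothesis says the score equals $\langle \ve_i,-\ve_i\rangle=-1$. A maximum equals $-1$ only if every term is at most $-1$, so $g_i\le -1<0$. This is the step I consider the crux: the negative target value forbids the shared ``escape'' vectors (such as $\vz$ in Theorem~\ref{thm:main_maxsim}) from ever winning the maximum. Note that the query embedding $\mathbf q_i$ is the same regardless of which document it is paired with, because the encoders are independent.

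\textbf{Step 2: read off the off-diagonal and diagonal entries.} For $j\neq i$, take $\vv=\ve_j$. The target is $0$. Since $g_i<0$, the maximum can equal $0$ only if $\langle \mathbf q_i,\mathbf d_{j,1}\rangle=0$. For $j=i$, take $\vv=\ve_i$. The target is $1>g_i$, so $\langle \mathbf q_i,\mathbf d_{i,1}\rangle=1$.

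\textbf{Step 3: rank argument.} Let $Q\in\R^{n\times M}$ have rows $\mathbf q_i^T$, and let $D\in\R^{M\times n}$ have columns $\mathbf d_{j,1}$. Step 2 gives $QD=I_n$. Hence
\[
n=\operatorname{rank}(I_n)\le\operatorname{rank}(Q)\le M .
\]
The contrapositive gives the ``consequently'' clause: if $n>M$, some pair of one-sparse vectors among $\{\ve_i\}\times\{\pm\ve_j\}$ must fail.

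The only delicate point is the bookkeeping in Step 1. One must use that $\mathcal F$ is shared and $\mathbf q_i$ is fixed, so the bound $g_i\le -1$ obtained from the single document $-\ve_i$ transfers to every other document $\ve_j$. The remaining steps are routine.
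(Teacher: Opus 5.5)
Your proof is correct and is essentially the paper's argument: a negative target forces the shared-vector contribution $g_i$ to be at most $-1$, so the off-diagonal zeros must be realized exactly by the input-dependent embeddings, which gives a full-rank $n\times n$ matrix that factors through $\R^M$. The only difference is where the sign sits: the paper pairs queries $-\ve_i$ with documents $\ve_j$, so the diagonal entries both pin $b_i\le -1$ and supply a nonzero diagonal, whereas you use extra test documents $-\ve_i$ to pin $g_i$ and then read off exactly $QD=I_n$ from the documents $\ve_j$.
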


\begin{proof}
For each $i,j\in[n]$, let
\[
\vu_i=-\ve_i,
\qquad
\vv_j=\ve_j,
\qquad
\langle\vu_i,\vv_j\rangle=-\delta_{ij},
\]
where $\ve_i$ denotes the $i$th standard basis vector and $\delta_{ij}$ is the Kronecker delta. Thus, query $\vu_i$ must score document $\vv_i$ as $-1$ and every document $\vv_j$ with $j\neq i$ as $0$.

Because these vectors are one-sparse, their input-dependent representations contain one embedding each. Write
\[
\cU(\vu_i)=\{\vm_i\},
\qquad
\cV_{\mathrm{var}}(\vv_j)=\{\vt_j\},
\]
for $\vm_i,\vt_j\in\R^M$. For each query, define the document-independent contribution of the shared vectors by
\[
b_i=\max_{\vf\in\mathcal F}\langle\vm_i,\vf\rangle,
\]
using the convention $b_i=-\infty$ when $\mathcal F=\emptyset$. Exact recovery therefore requires
\[
\max\!\left\{\langle\vm_i,\vt_j\rangle,b_i\right\}
=-\delta_{ij}
\qquad\text{for all }i,j\in[n].
\]

Taking $j=i$, the maximum must equal $-1$, so both of its arguments are at most $-1$. In particular,
\[
b_i\le -1
\qquad\text{and}\qquad
a_i:=\langle\vm_i,\vt_i\rangle\le -1.
\]
For $j\neq i$, the required score is $0$. Since $b_i\le-1$, the shared vectors cannot produce this score, and exact recovery forces
\[
\langle\vm_i,\vt_j\rangle=0.
\]

Hence the matrix $A\in\R^{n\times n}$ defined by
\[
A_{ij}=\langle\vm_i,\vt_j\rangle
\]
is diagonal with nonzero diagonal entries $a_i$, and therefore $\operatorname{rank}(A)=n$. On the other hand, if $U,V\in\R^{n\times M}$ have rows $\vm_i^T$ and $\vt_i^T$, respectively, then $A=UV^T$, implying
\[
\operatorname{rank}(A)\le M.
\]
Thus $n\le M$, proving the result.
\end{proof}

The proof shows that MaxSim is unable to exactly replicate even a simple inner product reconstruction between real-valued vectors. The shared document vectors $\mathcal F$ are of limited usefulness, since for a fixed query $\vu_i$, their entire contribution is limited to the same value $b_i$ for every document, forcing the input-dependent  document embeddings $\vt_j$ to realize a full-rank diagonal similarity matrix.

The zero vector used in Section \ref{sec:max_sim_positive_inner_product} is an especially direct example. If $\vz\in\mathcal F$, then every query has similarity $0$ with a shared document vector, making every MaxSim score non-negative. Standard MaxSim then cannot recover even the one-dimensional inner product $\langle-\ve_i,\ve_i\rangle=-1$.

Theorem \ref{thm:standard_maxsim_signed_limit} does not mean that standard MaxSim can never produce negative scores. Rather, it shows that standard MaxSim cannot extend the earlier sparsity-preserving construction to arbitrary real-valued vectors while retaining a fixed embedding dimension. This motivates our extension, detailed in the following section, which overcomes this limitation with standard MaxSim by decomposing the original vector values into two parts, magnitude and sign, which remain disentangled until after the maximum step in the similarity.

\subsection{Extension to MaxSim Similarity for Exact Real Number Inner Product Computations}
In Section \ref{sec:max_sim_positive_inner_product}, we showed that MaxSim similarity can exactly replicate the inner product between two non-negative vectors with representation size corresponding to the number of non-zero indices. In Section \ref{sec:standard_maxsim_signed_limitation}, we proved that with a sparsity-preserving encoder standard MaxSim cannot replicate exact inner products between real-valued vectors. In this section, we address this limitation by extending MaxSim and multi-vector representations to enable the exact inner product to be replicated for any real-valued vectors $\vu, \vv \in \R^n$, where, as before, $n$ can be countably infinite. The key modification is to handle negative values by decoupling each entry into its magnitude and sign. The embedding method used in Section \ref{sec:max_sim_positive_inner_product} is applied strictly to the magnitudes, so the maximum in the similarity still properly selects matching indices, while the scalar signs are incorporated after the maximum to reconstruct the correct final product.

\subsubsection{Modified Vector Representation and Similarity}

Any non-zero real number $x$ can be decomposed as $x = \sgn(x) \cdot |x|$. We adapt the sets $\cU$ and $\cV$ from Theorem \ref{thm:main_maxsim} to store these signs alongside the embedded vectors. We start by modifying MaxSim to allow the incorporation of sign values after the maximum operation.

\begin{definition}[Signed MaxSim Similarity]
Let $\cU_s$ and $\cV_s$ be sets of pairs, where each pair consists of a vector and a scalar sign. 
For each query pair $(\vm, s_q) \in \cU_s$, let $(\vt^*(\vm), s_d^*(\vm))$ denote the specific pair in $\cV_s$ that maximizes the inner product with $\vm$:
\begin{equation}
    (\vt^*(\vm), s_d^*(\vm)) = \argmax_{(\vt, s_d) \in \cV_s} \langle \vm, \vt \rangle.
\end{equation}
The Signed MaxSim similarity $S_{\pm}$ is defined as the sum over all query pairs of their maximum inner product, multiplied by the signs of both the query vector and the maximizing document vector:
\begin{equation}
    S_{\pm}(\cU_s, \cV_s) = \sum_{(\vm, s_q) \in \cU_s} s_q \cdot s_d^*(\vm) \cdot \langle \vm, \vt^*(\vm) \rangle.
\end{equation}
\end{definition}

To utilize this new similarity function, we must also adapt our vector representations to store these scalar signs alongside the embedded magnitudes.
\begin{definition}[Signed Vector Sets]
Let $\vu, \vv \in \R^n$ be real-valued vectors. We represent them as sets of pairs, $\cU_s, \cV_s \subset \R^3 \times \{-1, 1\}$, where each pair consists of a vector and a scalar sign.

For the vector $\vu$, we map each non-zero entry $u_i$ using its magnitude $|u_i|$ and sign $\sgn(u_i)$:
\begin{equation}
    \cU_s = \left\{ \big(|u_i| \phi(i), \sgn(u_i)\big) \mid i \in \supp(\vu) \right\}.
\end{equation}

For the vector $\vv$, let $c(i, |v_i|)$ be the polynomial coefficient vector generated by Lemma \ref{lemma:poly}. We define:
\begin{equation}
    \cV_s = \left\{ \big(c(i, |v_i|), \sgn(v_i)\big) \mid i \in \supp(\vv) \right\} \cup \{ (\vz, 1) \},
\end{equation}
where $(\vz, 1)$ is the zero pair, included to serve as a baseline for non-matching indices.
\end{definition}

\subsubsection{Inner Product Recovery}
With the signed vector representations and the modified similarity function defined, we now prove that this formulation exactly recovers the inner product for any arbitrary real-valued vectors.

\begin{theorem}
Let $\vu, \vv \in \R^n$ be arbitrary real-valued vectors, where $\vu$ is $k_u$-sparse and $\vv$ is $k_v$-sparse. There exist sets of pairs $\cU_s, \cV_s \subset \R^3 \times \{-1, 1\}$, with $|\cU_s| = k_u$ and $|\cV_s| = k_v + 1$, such that:
\begin{equation}
    \langle \vu, \vv \rangle = S_{\pm}(\cU_s, \cV_s).
\end{equation}
Importantly, like in Theorem \ref{thm:main_maxsim}, $\cU_s$ and $\cV_s$ can be produced with only access to $\vu$ and $\vv$, respectively, with no knowledge of the other vector until the similarity is calculated.
\end{theorem}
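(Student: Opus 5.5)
We take $\cU_s$ and $\cV_s$ to be exactly the Signed Vector Sets defined above. By construction $|\cU_s| = k_u$ and $|\cV_s| = k_v + 1$, where the extra element is the zero pair $(\vz,1)$. The only assumption is that $|\supp(\vu)| = k_u$ and $|\supp(\vv)| = k_v$; otherwise we pad or read the sparsity as exact. Independence of the two encodings is immediate: $\cU_s$ uses only the pairs $(i, u_i)$, and $\cV_s$ uses only the pairs $(i, v_i)$ together with the fixed pair $(\vz,1)$.

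To verify $S_{\pm}(\cU_s,\cV_s) = \langle \vu,\vv\rangle$, write $\langle \vu,\vv\rangle = \sum_{i\in\supp(\vu)} u_i v_i$. It then suffices to show that each query pair $(|u_i|\phi(i), \sgn(u_i))$ contributes exactly $u_i v_i$. The key observation is that the $\argmax$ in Signed MaxSim depends only on the magnitude vectors. Since $|u_i|>0$ and $|v_j|\ge 0$, we are in exactly the non-negative setting of Theorem \ref{thm:main_maxsim}. Lemma \ref{lemma:poly} gives $\langle |u_i|\phi(i), c(j,|v_j|)\rangle = |u_i|\,p_j(i)$. This equals $|u_i||v_i|$ if $j=i$, and is strictly negative if $j\ne i$. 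The zero pair gives $0$.

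We then split into the same two cases as before.

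\textbf{Case $i\notin\supp(\vv)$.} Every coefficient vector yields a negative value, so the maximizer is the zero pair. The contribution is $\sgn(u_i)\cdot 1\cdot 0 = 0 = u_i v_i$.

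\textbf{Case $i\in\supp(\vv)$.} The value $|u_i||v_i|>0$ strictly exceeds all others. The maximizer is therefore uniquely $(c(i,|v_i|),\sgn(v_i))$, and the contribution is $\sgn(u_i)\sgn(v_i)|u_i||v_i| = u_i v_i$.

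Summing over $i\in\supp(\vu)$ then gives the result.

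The main subtlety, and the step that needs care, is that the $\argmax$ is well defined. The sign attached to the maximizer matters, so ties could in principle change the answer. We handle this by noting that the maximum is strict in Case 2. In Case 1 the zero pair is the unique maximizer, because all other values are strictly negative. This is why Lemma \ref{lemma:poly} requires the strict inequality $p(d')<0$, and why the zero pair is given sign $+1$ (any sign would in fact work, since it multiplies $0$). With uniqueness established, the remaining steps are routine restatements of the proof of Theorem \ref{thm:main_maxsim} with absolute values inserted.
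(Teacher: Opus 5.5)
Your proof is correct and follows the paper's argument essentially step for step. It uses the same signed vector sets and the same two-case analysis for each query pair, and it relies on the same observation that the $\argmax$ is unique in both cases, so the attached document sign is well defined. Your side remarks (padding when the support is smaller than $k_u$ or $k_v$, and the sign of the zero pair being irrelevant) go a little beyond what the paper states but do not change the argument.
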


We extend the construction from Theorem \ref{thm:main_maxsim} by operating exclusively on the absolute values of the vector entries. Because magnitudes are strictly positive, the inner maximization step correctly identifies matching indices exactly as it did in the positive-only case. Once the matching index is found (or the zero vector is selected in the case of a mismatch), the original signs are multiplied back into the resulting magnitude to recover the exact real-valued product.

\begin{proof}
We construct the sets $\cU_s$ and $\cV_s$ as defined above. We analyze the term contributed to the Signed Max-Sim sum by each pair $(\vm_i, s_i) \in \cU_s$, where $\vm_i = |u_i| \phi(i)$ and $s_i = \sgn(u_i)$ for some $i \in \supp(\vu)$. 

The maximization step, $\argmax_{(\vt, s_d) \in \cV_s} \langle \vm_i, \vt \rangle$, depends only on the vector components and operates on strictly positive magnitudes ($u_i > 0$ and $v_i > 0$). Therefore, the logic follows the proof of Theorem \ref{thm:main_maxsim}. We analyze two cases based on whether the index $i$ is present in $\vv$:

\textbf{Case 1: $i \notin \supp(\vv)$ (Mismatched Index).} \\
By Lemma \ref{lemma:poly}, for all $(c(j, |v_j|), \sgn(v_j)) \in \cV_s$, the inner product $\langle |u_i| \phi(i), c(j, |v_j|) \rangle = |u_i| p(i) < 0$. However, the zero pair yields $\langle |u_i| \phi(i), \vz \rangle = 0$. Thus, the unique maximum is $0$, achieved by the pair $(\vt^*(\vm_i), s_d^*(\vm_i)) = (\vz, 1)$. The contribution to the sum is:
\begin{equation}
    s_i \cdot s_d^*(\vm_i) \cdot \langle \vm_i, \vt^*(\vm_i) \rangle = \sgn(u_i) \cdot 1 \cdot 0 = 0.
\end{equation}
This correctly matches the product $u_i v_i = 0$.

\textbf{Case 2: $i \in \supp(\vv)$ (Matching Index).} \\
By construction, $\cV_s$ contains the pair $(c(i, |v_i|), \sgn(v_i))$. By Lemma \ref{lemma:poly}, the inner product with this specific vector is $\langle |u_i| \phi(i), c(i, |v_i|) \rangle = |u_i| |v_i| > 0$. For all other vectors $c(j, |v_j|) \in \cV_s$ ($j \neq i$), the inner product is strictly negative, and the zero vector yields $0$. Thus, the unique maximum is $|u_i| |v_i|$, achieved by the pair $(\vt^*(\vm_i), s_d^*(\vm_i)) = (c(i, |v_i|), \sgn(v_i))$. The contribution to the sum is:
\begin{align*}
    s_i \cdot s_d^*(\vm_i) \cdot \langle \vm_i, \vt^*(\vm_i) \rangle &= \sgn(u_i) \cdot \sgn(v_i) \cdot \big( |u_i| |v_i| \big) \\
    &= \big( \sgn(u_i) |u_i| \big) \cdot \big( \sgn(v_i) |v_i| \big) \\
    &= u_i v_i.
\end{align*}

In both cases, the $\argmax$ is uniquely defined, and the term contributed by the index $i \in \supp(\vu)$ is exactly $u_i v_i$. Summing over all pairs in $\cU_s$ yields:
\begin{equation}
    S_{\pm}(\cU_s, \cV_s) = \sum_{i \in \supp(\vu)} u_i v_i = \langle \vu, \vv \rangle.
\end{equation}
This completes the proof.
\end{proof}

\section{The Dimensionality Bottleneck of Standard Inner Products}
\label{sec:dim_bottleneck}
In Theorem \ref{thm:main_maxsim}, we demonstrated that MaxSim similarity can exactly reproduce the inner product of two non-negative $k$-sparse vectors of arbitrary dimension using finite sets of low-dimensional vectors. Given this capability, it is natural to ask whether single-vector representations can achieve a similar compression. In this section, we demonstrate a fundamental rank constraint: no finite-dimensional inner product space can exactly preserve the inner products of arbitrarily high-dimensional sparse vectors, establishing a strict theoretical separation between single-vector and multi-vector retrieval paradigms.

To prove that embedding high-dimensional sparse vectors into a lower finite dimension $d$ is impossible, it suffices to show that we cannot even embed $d+1$ mutually orthogonal vectors into $\R^d$. We establish this lower bound below. To ensure our impossibility result is as general as possible, we allow for asymmetric embeddings, where the query and document vectors can be processed by entirely different mapping functions.

\begin{theorem}[Impossibility of Dimensionality Compression]
\label{thm:compression_impossibility}
Let $\mathcal{S} = \{\ve_1, \dots, \ve_{d+1}\} \subset \R^{d+1}$ be the set of $d+1$ standard basis vectors. For any finite dimension $d \in \N$, there are no mappings $f, g: \mathcal{S} \to \R^d$ that exactly preserve pairwise inner products. That is, there are no $f$ and $g$ such that for all $\vx, \vy \in \mathcal{S}$:
\begin{equation}
    \langle f(\vx), g(\vy) \rangle = \langle \vx, \vy \rangle.
\end{equation}
\end{theorem}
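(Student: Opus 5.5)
We argue by contradiction with a rank argument, mirroring the proof of Theorem \ref{thm:standard_maxsim_signed_limit}. Suppose mappings $f, g : \mathcal{S} \to \R^d$ exist with $\langle f(\vx), g(\vy) \rangle = \langle \vx, \vy \rangle$ for all $\vx, \vy \in \mathcal{S}$. We stack the images into matrices $F, G \in \R^{(d+1) \times d}$, where the $i$th row of $F$ is $f(\ve_i)^T$ and the $i$th row of $G$ is $g(\ve_i)^T$.

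The key identity is that the $(i,j)$ entry of $FG^T$ is $\langle f(\ve_i), g(\ve_j) \rangle$. By the assumed preservation of inner products this equals $\langle \ve_i, \ve_j \rangle = \delta_{ij}$. Hence
\[
FG^T = I_{d+1},
\]
the $(d+1) \times (d+1)$ identity matrix, which has rank $d+1$.

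On the other hand, $F$ and $G$ each have only $d$ columns. Therefore
\[
\operatorname{rank}(FG^T) \le \min\{\operatorname{rank}(F), \operatorname{rank}(G)\} \le d .
\]
This contradicts $\operatorname{rank}(I_{d+1}) = d+1$, so no such $f, g$ exist. Note that $f$ and $g$ need not be linear or related to each other: only their values on the $d+1$ basis vectors enter the argument, so the asymmetric setting is handled automatically.

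There is no real obstacle. The only step needing care is the identification of the Gram-type matrix of cross inner products with $FG^T$, together with the standard bound $\operatorname{rank}(AB) \le \min\{\operatorname{rank} A, \operatorname{rank} B\}$. As a remark, we would note the consequence for Section \ref{sec:dim_bottleneck}: any 1-sparse family of dimension $d+1$ already defeats $\R^d$. In contrast, Theorem \ref{thm:main_maxsim} handles all such vectors in $\R^3$, which gives the claimed separation.
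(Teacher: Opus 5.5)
Your proof is correct and is essentially the paper's own argument. The paper places $f(\ve_i)$ and $g(\ve_i)$ as columns of $U,V\in\R^{d\times(d+1)}$ and compares $\operatorname{rank}(U^TV)\le d$ with $\operatorname{rank}(I_{d+1})=d+1$, which is your $FG^T$ computation up to transposition.
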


\begin{proof}
Assume such mappings $f$ and $g$ exist. Let $U, V \in \R^{d \times (d+1)}$ be the matrices whose $i$-th columns are the mapped vectors $f(\ve_i)$ and $g(\ve_i)$, respectively. The matrix of pairwise inner products between the mapped vectors is given by the product $U^T V \in \R^{(d+1) \times (d+1)}$. 

By our assumption, $U^T V$ must exactly equal the Gram matrix of the original vectors in $\mathcal{S}$, which is the identity matrix $I_{d+1}$. However, the rank of a matrix product is bounded by its inner dimension, meaning $\text{rank}(U^T V) \le d$. This implies $d+1 = \text{rank}(I_{d+1}) \le d$, which is a contradiction.
\end{proof}

This rank bottleneck dictates that standard inner products cannot exactly compress even $d+1$ orthogonal concepts into $d$ dimensions. Consequently, representing vectors from an arbitrarily large vocabulary in a fixed-dimensional space inherently requires approximation. 

This highlights a core advantage of the MaxSim architecture. While a standard inner product is strictly constrained by the linear algebraic rank of its embedding dimension, MaxSim represents a $k$-sparse vector as a set of $k$ vectors in $\R^3$. By requiring only $O(k)$ parameters, MaxSim bypasses this rank constraint entirely, perfectly preserving exact inner products regardless of the underlying vocabulary size.

\section{Late-Interaction Similarity as Logical Expression Evaluation}
\label{sec:logical_evaluation}

In this section, we explore the logical expressivity of MaxSim. We demonstrate how it naturally acts as an aggregation of fuzzy logical operations and prove its capability to evaluate positive Conjunctive Normal Form expressions (i.e. Conjunctive Normal Form expressions without negations).

\subsection{Motivation: MaxSim as a Fuzzy Logical OR}
In the previous section, we demonstrated that MaxSim can replicate the exact inner product of two non-negative vectors by mapping each non-zero entry to a 3-dimensional embedding. While this establishes MaxSim's capacity to represent standard vector spaces, it does not fully explain the benefit of MaxSim. 

Standard inner products compute a sum over all matching features. In information retrieval, this inherently acts as a "fuzzy AND" or an accumulation of evidence: a document with both "red" and "blue" features will score higher than a document with only "red", even if the user only wanted one or the other. 

MaxSim, however, computes a sum of \textit{maximums}. If we can encode a set of alternative concepts (e.g., "red", "crimson", "scarlet") into a \textit{single} query vector, the inner maximization of MaxSim will score the document based solely on its single best matching feature. It will not over-reward documents that contain multiple synonyms. Similarly, an OR relationship between concepts like "red" and "blue" can be encoded into a single query vector, ensuring that documents containing both terms are not disproportionately rewarded when the user only desires one. In this way, MaxSim naturally executes a \textbf{fuzzy logical OR} over grouped query terms. 

In this section, we formalize this intuition. We show that by increasing the embedding dimension proportionally to the size of the clause, a single query vector can encode an arbitrarily large OR-clause of terms. We then prove that this mechanism allows standard MaxSim to evaluate positive Conjunctive Normal Form (CNF) Boolean logic so that the ordering of documents matches the exact evaluation of the expression.

\subsection{Formalism for Grouped Queries (Fuzzy OR)}

To formalize the intuition of MaxSim as a fuzzy OR, we first define the mathematical structures representing documents and grouped queries. Let our universe of keys be the natural numbers $\N$ and our values be non-negative real numbers in $\R_{\ge 0}$. 

\begin{definition}[Document and Query Groups]~\\
\label{def:document_and_query_group_sets}
Let $D = \{ (d_1, w_1), \dots, (d_{k_d}, w_{k_d}) \}$ be a \textbf{document set} of key-value pairs, where $d_i \in \N$ are unique keys and $w_i \in \R_{\ge 0}$ are weights. 

Let $S = \{ (d_{q,1}, w_{q,1}), \dots, (d_{q,m}, w_{q,m}) \}$ be a \textbf{query group}. The group $S$ is a set of key-value pairs representing a single OR-clause of terms, where $d_{q,i} \in \N$ are unique keys and $w_{q,i} \in \R_{\ge 0}$ are weights. Let $D_S \subset \N$ denote the set of keys present in query group $S$.
\end{definition}

In standard fuzzy logic, the logical OR of a set of truth values $x_1, \dots, x_n \in[0, 1]$ is defined by the maximum operator: $\bigvee_{i=1}^n x_i = \max(x_1, \dots, x_n)$. To apply this to information retrieval, we must extend this concept in two natural ways.

First, rather than strict $[0, 1]$ truth values, a document's match to a specific term is typically represented by an unbounded, non-negative relevance weight $w_i \in \R_{\ge 0}$. It is natural that a document might be highly relevant to one term but only marginally relevant to another. 

Second, not all terms within an OR-clause are equally valuable to the user. For example, an exact keyword match might be highly desired, while a broader synonym is acceptable but less preferred. We can represent this by assigning a query weight $w_{q,i} \in \R_{\ge 0}$ to each term $d_{q,i}$ in the query group $S$. 

When a document contains a matching term, its overall score for that specific term is naturally the product of the document's relevance weight and the query's importance weight: $w_i w_{q,i}$. To evaluate the entire OR-clause, we apply the standard fuzzy logic principle of taking the maximum over all available options. This yields a weighted, unbounded generalization of the logical OR, which we refer to as a \textbf{Weighted Max-OR}.

\begin{definition}[Weighted Max-OR] 
Given a document set $D$ and a query group $S$ representing an OR-clause of terms, we define the \textbf{Weighted Max-OR} evaluation of $S$ on $D$ as the maximum combined weight across all matching terms:
\begin{equation}
    \text{MaxOR}(S, D) = \max \left( \{0\} \cup \{ w_i w_{q,i} \mid (d_i, w_i) \in D \text{ and } (d_i, w_{q,i}) \in S \} \right).
\end{equation}
The set $\{0\}$ is included so that when there are no shared keys between $D$ and $S$ the maximum is still defined and evaluates to $0$ like in standard fuzzy logic.
\end{definition}

This definition provides a natural extension of Boolean logic to continuous retrieval scores. By taking the maximum over the matching terms in a query group, we select the single strongest weighted match for that concept. This perfectly captures the intent of an OR-clause: a document is rewarded for its best-matching synonym, but its score is not artificially inflated if it happens to contain multiple synonyms (which would happen if we used a sum, representing an accumulation of matches or a fuzzy AND).

With the Weighted Max-OR defined for a single clause, we now show that MaxSim can evaluate an entire set of these clauses simultaneously.

\begin{theorem}[Weighted Max-OR Evaluation]
\label{thm:grouped_queries}
Let $D$ be a document set and $Q = \{ S_1, \dots, S_{k_q} \}$ be a grouped query set consisting of multiple query groups. Let $M = 2 \max_{S_j \in Q} |S_j| + 1$. There exist sets of vectors $\cQ, \cD \subset \R^M$, with $|\cQ| = |Q|$ and $|\cD| = |D| + 1$, such that the MaxSim similarity computes the sum of the Weighted Max-OR evaluations for each query group:
\begin{equation}
    S(\cQ, \cD) = \sum_{S_j \in Q} \text{MaxOR}(S_j, D).
\end{equation}
Importantly, the sets $\cQ$ and $\cD$ can be constructed independently; the mapping function applied to $Q$ requires no knowledge of $D$, and the mapping applied to $D$ requires no knowledge of $Q$.
\end{theorem}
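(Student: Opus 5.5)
The plan is to generalize the quadratic construction of Lemma \ref{lemma:poly} from one key to a whole group of keys. A single query vector should evaluate a fixed polynomial at a document key, and that polynomial should take the weighted value at every key in the group and be negative at every other natural number. With $m_j = |S_j|$, a polynomial of degree $2m_j$ has $2m_j+1 \le M$ coefficients, which is where the dimension $M$ comes from. The roles are swapped relative to Theorem \ref{thm:main_maxsim}: the document side carries the embedding $\phi_M(d) = (1, d, d^2, \dots, d^{M-1})^T$, scaled by its weight, and the query side carries polynomial coefficients padded with zeros up to length $M$.

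\textbf{Step 1: polynomial lemma.} For a query group $S$ with keys $d_{q,1},\dots,d_{q,m}$ and weights $w_{q,i}$, the goal is a polynomial $p_S$ of degree at most $2m$ with $p_S(d_{q,i}) = w_{q,i}$ for every $i$ and $p_S(d') < 0$ for every $d'\in\N\setminus D_S$. Take $L(x)$ to be the Lagrange interpolant of degree at most $m-1$ through the points $(d_{q,i}, w_{q,i})$, and let $P(x) = \prod_i (x-d_{q,i})^2$. Set $p_S(x) = L(x) - C\,P(x)$. This interpolates correctly on $D_S$, and its degree is at most $2m$. I expect the main obstacle to be the negativity at every other natural number, since $L$ can be positive and large there. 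Two facts handle it. First, $P(d') \ge 1$ for every integer $d'\notin D_S$. Second, $P$ grows like $x^{2m}$ while $|L|$ grows like $x^{m-1}$. So $|L(x)|/P(x)$ is bounded on integers outside $D_S$: it tends to $0$ as $|x|\to\infty$, and on the remaining finitely many integers it is finite. Choosing $C$ larger than $1$ plus the supremum of this ratio gives $p_S(d')<0$. Keys lie in $\N$, so only nonnegative integers matter, which makes this even easier. Zero-weight entries cause no issue, because $\max$ with $0$ absorbs them.

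\textbf{Step 2: construction.} Define $\cQ = \{\vc_{S_j} : S_j\in Q\}$, where $\vc_{S_j}\in\R^M$ is the coefficient vector of $p_{S_j}$ padded with zeros. Define $\cD = \{ w_i\,\phi_M(d_i) : (d_i,w_i)\in D\} \cup \{\vz\}$. These satisfy the stated cardinalities, up to the degenerate case where two groups give the same vector, in which case the sets are treated as multisets as the paper implicitly does. Each side depends only on its own input. Then $\langle \vc_{S_j}, w_i\phi_M(d_i)\rangle = w_i\,p_{S_j}(d_i)$.

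\textbf{Step 3: verification per group.} Fix $S_j$. If $d_i\in D_{S_j}$, the inner product equals $w_i w_{q,i}\ge 0$. If $d_i\notin D_{S_j}$, it is $w_i\,p_{S_j}(d_i)$, which is $\le 0$ because $w_i\ge0$. The zero vector contributes $0$. So the inner maximum equals $\max(\{0\}\cup\{w_iw_{q,i}: d_i \in D_{S_j}\}) = \text{MaxOR}(S_j,D)$. Summing over the groups gives the theorem.
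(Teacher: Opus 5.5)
Your proposal is correct and follows the paper's proof essentially step for step. Both use the Lagrange interpolant minus $C$ times the squared root polynomial, with $C$ chosen to dominate the bounded ratio on $\N\setminus D_S$; both embed documents as $w_i\phi_M(d_i)$ together with $\vz$, use zero-padded query coefficient vectors, and finish with a per-group maximum argument.

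One small correction to your multiset caveat: distinct groups can never produce the same coefficient vector. If their key sets differ, one polynomial is nonnegative and the other negative at some key; if the key sets agree, they take different weights at some key. The only possible collapse is therefore on the document side, where a zero weight sends $w_i\phi_M(d_i)$ to $\vz$. This is harmless for the maximum, and the paper glosses over it as well.
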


To prove this theorem, we first generalize our embedding map to arbitrary dimensions, which will allow us to construct polynomials that encode entire OR-clauses.

\begin{definition}[Generalized Embedding Map]
\label{def:generalized_embedding_map}
For a given dimension $M \in \N$ where $M \neq 0$, define the generalized embedding map $\phi_M: \N \to \R^M$ as:
\begin{equation}
    \phi_M(d) = \begin{pmatrix} 1 & d & d^2 & \dots & d^{M-1} \end{pmatrix}^T.
\end{equation}
\end{definition}

To encode an entire group $S$ into a single vector, we construct a polynomial that interpolates the weights of the desired keys while evaluating to a negative number for all irrelevant keys.

\begin{lemma}[OR-Clause Polynomial Construction]
\label{lemma:multi_key_poly}
For any query group $S$ of size $m = |S|$, there exists a coefficient vector $c(S)  = \vc_S \in \R^{2m+1}$ defining a polynomial $p_S(t) = \vc_S^T \phi_{2m+1}(t)$ such that:
\begin{enumerate}
    \item $p_S(d) = w$ for all $(d, w) \in S$.
    \item $p_S(d') < 0$ for all keys $d' \in \N \setminus D_S$.
\end{enumerate}
\end{lemma}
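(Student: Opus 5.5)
Our plan is to generalize the construction of Lemma \ref{lemma:poly}. We take a polynomial that is non-positive everywhere, vanishes exactly on the keys of $S$, and grows quadratically away from each key. We then add a degree-$(m-1)$ interpolating polynomial that produces the prescribed weights. Write $S=\{(d_{q,1},w_{q,1}),\dots,(d_{q,m},w_{q,m})\}$ and set
\[
N(t)=\prod_{i=1}^{m}(t-d_{q,i})^2 ,
\]
which has degree $2m$. Let $L(t)$ be the unique Lagrange interpolating polynomial of degree at most $m-1$ with $L(d_{q,i})=w_{q,i}$. It exists because the keys are distinct. Our candidate is
\[
p_S(t)=L(t)-C\,N(t),
\]
with $C>0$ to be chosen. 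This polynomial has degree at most $2m$, so its coefficient vector $\vc_S$ lies in $\R^{2m+1}$ and $p_S(t)=\vc_S^T\phi_{2m+1}(t)$ by Definition \ref{def:generalized_embedding_map}. Property 1 is immediate, since $N(d_{q,i})=0$ gives $p_S(d_{q,i})=w_{q,i}$.

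For property 2, we must show that $L(d')<C\,N(d')$ for every $d'\in\N\setminus D_S$. Two facts are available. First, $N(d')\ge 1$ at such integers, because every factor $(d'-d_{q,i})^2$ is at least $1$. Second, $N$ has degree $2m$ while $L$ has degree at most $m-1$. A single constant $C$ must work for infinitely many $d'$, so uniformity is the main point, and we handle it by a compactness-plus-growth argument.

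The ratio $L(t)/N(t)$ is a rational function defined at all non-key integers. It tends to $0$ as $|t|\to\infty$ because $\deg N>\deg L$. Hence there is some $R$ with $|L(d')|\le N(d')$ for all $|d'|>R$. The finitely many non-key integers with $|d'|\le R$ each give a finite ratio $L(d')/N(d')$. We then set $C=1+\max\{1,\ \max_{|d'|\le R,\ d'\notin D_S} L(d')/N(d')\}$, and this yields $C\,N(d')>L(d')$ for all non-key $d'\in\N$, that is, $p_S(d')<0$.

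If an explicit constant is preferred, we can instead bound $|L(d')|\le \sum_i w_{q,i}\,|\ell_i(d')|$, where $\ell_i$ are the Lagrange basis polynomials. Each factor of $\ell_i$ satisfies $|d'-d_{q,j}|/|d_{q,i}-d_{q,j}|\le |d'-d_{q,j}|$ because the keys are distinct integers. This gives $|L(d')|\le \big(\sum_i w_{q,i}\big)\sqrt{N(d')}\le \big(\sum_i w_{q,i}\big)N(d')$ using $N(d')\ge1$, so $C=1+\sum_i w_{q,i}$ suffices. We expect this bounding step to be the only real obstacle; the remaining steps are routine coefficient expansion.
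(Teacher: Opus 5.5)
Your proposal is correct and follows the paper's proof: the paper builds exactly $p_S(t)=x_S(t)-C\,y_S(t)^2$, with $x_S$ the Lagrange interpolant and $y_S(t)=\prod_{d\in D_S}(t-d)$, and chooses $C$ using $y_S(d')^2\ge 1$ together with $x_S/y_S^2\to 0$, which is your growth-plus-finitely-many-points step. Your explicit constant $C=1+\sum_i w_{q,i}$, obtained from $|d_{q,i}-d_{q,j}|\ge 1$ and $|d'-d_{q,i}|\ge 1$, is also valid and slightly sharper, since it makes $C$ computable directly from $S$ without the limiting argument.
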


\begin{proof}
Let $x_S(t)$ be the Lagrange interpolating polynomial of degree at most $m - 1$ that passes through all points in $S$. By definition, $x_S(d) = w$ for all $(d, w) \in S$.

Define the root polynomial $y_S(t) = \prod_{d \in D_S} (t - d)$. This polynomial has degree $m$ and evaluates to zero exactly on the keys in $D_S$.

We define our target polynomial as $p_S(t) = x_S(t) - C \cdot (y_S(t))^2$ for some value $C > 0$ selected based on the pairs in $S$. 
For any target key $d \in D_S$, $y_S(d) = 0$, so $p_S(d) = x_S(d) = w$, satisfying the first condition.

For any non-target key $d' \in \N \setminus D_S$, we require $p_S(d') < 0$, which is equivalent to requiring $C > \frac{x_S(d')}{(y_S(d'))^2}$. Because $d'$ and all $d \in D_S$ are distinct integers, the difference $(d' - d)$ is a non-zero integer, meaning $(y_S(d'))^2 \ge 1$. This ensures the denominator is never zero and does not shrink arbitrarily small. Furthermore, the degree of $(y_S(t))^2$ is $2m$, which is strictly greater than the degree of $x_S(t)$ (which is at most $m - 1$). Therefore, as $t \to \infty$, the ratio $\frac{x_S(t)}{(y_S(t))^2} \to 0$. Because the ratio goes to zero asymptotically and is evaluated over discrete integers, it is bounded above by some finite maximum value $B$ for all $d' \in \N \setminus D_S$.

By choosing $C > \max(0, B)$, we guarantee $p_S(d') < 0$ for all $d' \notin D_S$. The maximum degree of $p_S(t)$ is the degree of $(y_S(t))^2$, which is $2m$. Thus, it has $2m+1$ coefficients, which can be represented by the vector $\vc_S \in \R^{2m+1}$.
\end{proof}

Using this polynomial construction, we can now show that the inner maximization step of MaxSim perfectly computes the Weighted Max-OR for a single query group.

\begin{lemma}[Inner Maximization for a Single Query Group]
\label{lemma:inner_max}
Let $S$ be a query group and $D$ be a document set. Let $M \ge 2|S| + 1$. Let $c(S) = \vc_S \in \R^M$ be the polynomial coefficient vector from Lemma \ref{lemma:multi_key_poly} (padded with zeros if $2|S| + 1 < M$). Let $\cD = \{ w_i \phi_M(d_i) \mid (d_i, w_i) \in D \} \cup \{ \vz \}$ be the document embeddings. Then the maximum inner product evaluates to the Weighted Max-OR:
\begin{equation}
    \max_{\vd \in \cD} \langle \vc_S, \vd \rangle = \text{MaxOR}(S, D).
\end{equation}
\end{lemma}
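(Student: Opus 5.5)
The plan is to compute each inner product $\langle \vc_S, \vd\rangle$ for $\vd \in \cD$ explicitly, using the fact that inner products against $\phi_M$ evaluate the polynomial $p_S$, and then take the maximum. The main steps are as follows.

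\textbf{Reduction to polynomial evaluation.} For a document embedding $\vd = w_i \phi_M(d_i)$, linearity gives
\[
\langle \vc_S, w_i\phi_M(d_i)\rangle = w_i \langle \vc_S, \phi_M(d_i)\rangle = w_i\, p_S(d_i).
\]
The zero-padding of $\vc_S$ from length $2|S|+1$ up to $M$ does not change the value, because the extra coordinates multiply the higher powers $d_i^{2|S|+1},\dots,d_i^{M-1}$ by zero. Hence the inner product equals $\vc_S^T\phi_{2|S|+1}(d_i) = p_S(d_i)$, times $w_i$. The zero vector always contributes $0$.

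\textbf{Case split on each document key.} I then split according to whether $d_i$ lies in $D_S$.
\begin{itemize}
\item If $d_i \in D_S$ with $(d_i, w_{q,i}) \in S$, Lemma~\ref{lemma:multi_key_poly}(1) gives $p_S(d_i) = w_{q,i}$. The inner product is therefore $w_i w_{q,i}$, which is exactly an element of the set appearing in the definition of $\text{MaxOR}(S,D)$.
\item If $d_i \notin D_S$, Lemma~\ref{lemma:multi_key_poly}(2) gives $p_S(d_i) < 0$. Since $w_i \ge 0$, the product $w_i p_S(d_i)$ is at most $0$: it is strictly negative when $w_i > 0$, and exactly $0$ when $w_i = 0$.
\end{itemize}

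\textbf{Comparison of the two maxima.} The set of values $\{\langle \vc_S,\vd\rangle : \vd\in\cD\}$ consists of three kinds of elements:
\begin{itemize}
\item the value $0$ from $\vz$;
\item the products $w_i w_{q,i}$ for matching keys;
\item some values $\le 0$ for non-matching keys.
\end{itemize}
The set defining MaxOR is $\{0\}$ together with the matching products. The two sets therefore differ only by extra elements that are $\le 0$, and these never exceed the element $0$ already present in both. So the two maxima coincide.

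The main subtlety is this last comparison. The non-matching terms need not be strictly negative when $w_i = 0$, so the argument must rely on $\le 0$ together with the presence of $\vz$, rather than on strict negativity. Apart from that and the padding remark, the argument is a direct evaluation.
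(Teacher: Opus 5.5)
Your proposal is correct and follows essentially the same route as the paper. You evaluate each inner product as $w_i\,p_S(d_i)$, split on whether $d_i \in D_S$ using Lemma~\ref{lemma:multi_key_poly}, and let the zero vector absorb the non-positive values from non-matching keys. Your final comparison (the two sets differ only by elements that are $\le 0$) and your zero-padding remark are slightly more careful than the paper, which asserts that the set of inner products is exactly $\{0\} \cup \{w_i w_{q,i}\}$, even though that set can also contain negative values.
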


\begin{proof}
For any document vector $\vd_i = w_i \phi_M(d_i) \in \cD$, the inner product evaluates the polynomial: $\langle \vc_S, w_i \phi_M(d_i) \rangle = w_i p_S(d_i)$. 

If $d_i \notin D_S$ (a non-matching key), Lemma \ref{lemma:multi_key_poly} guarantees $p_S(d_i) < 0$. Since document weights $w_i \ge 0$, the inner product is $\le 0$. The zero vector $\vz \in \cD$ yields $\langle \vc_S, \vz \rangle = 0$, ensuring these non-matches result in a score of at most $0$.

If $d_i \in D_S$ (a matching key), let $(d_i, w_{q,i})$ be the corresponding pair in $S$. Lemma \ref{lemma:multi_key_poly} guarantees $p_S(d_i) = w_{q,i}$. The inner product is exactly $w_i w_{q,i} \ge 0$ as both $w_i \ge 0$ and $w_{q, i} \ge 0$.

Therefore, the set of all inner products evaluated by the $\max$ operation is exactly $\{0\} \cup \{ w_i w_{q,i} \mid (d_i, w_i) \in D \text{ and } d_i \in D_S \}$. The maximum of this set is, by definition, $\text{MaxOR}(S, D)$.
\end{proof}

We now prove the main theorem of this section.
\begin{proof}[Proof of Theorem \ref{thm:grouped_queries}]
We construct $\cD$ by mapping each document pair $(d_i, w_i) \in D$ using the generalized embedding map $\phi_M$ (Definition \ref{def:generalized_embedding_map}):
\begin{equation}
    \cD = \left\{ w_i \phi_M(d_i) \mid (d_i, w_i) \in D \right\} \cup \{ \vz \}.
\end{equation}

We construct $\cQ$ by mapping each query group $S_j \in Q$ to its polynomial coefficient vector $\vc_{S_j} = c(S_j)$ from Lemma \ref{lemma:multi_key_poly} (padded with zeros to length $M$):
\begin{equation}
    \cQ = \left\{ \vc_{S_j} \mid S_j \in Q \right\}.
\end{equation}

Expanding the MaxSim similarity $S(\cQ, \cD)$, we have:
\begin{equation}
    S(\cQ, \cD) = \sum_{\vc_{S_j} \in \cQ} \max_{\vd \in \cD} \langle \vc_{S_j}, \vd \rangle.
\end{equation}

By Lemma \ref{lemma:inner_max}, the inner maximization for each query group $S_j \in Q$ exactly computes $\text{MaxOR}(S_j, D)$. Substituting this into the sum yields:
\begin{equation}
    S(\cQ, \cD) = \sum_{S_j \in Q} \text{MaxOR}(S_j, D).
\end{equation}
This completes the proof.
\end{proof}

\subsection{Exact Boolean Logic (Positive CNF)}

Having established that MaxSim evaluates a Weighted Max-OR, we now show that if we restrict the weights to binary values, MaxSim exactly evaluates standard Conjunctive Normal Form (CNF) Boolean logic without negations, so that a set of scored documents is rank equivalent to the exact evaluation. We start by formally defining rank equivalence.

\begin{definition}[Rank Equivalence]
Let $X$ be a set of items. Given a target scoring function $s: X \to \R$ and an evaluated scoring function $s': X \to \R$, we say $s'$ is \textbf{rank equivalent} to $s$ if for all $x, y \in X$:
\begin{equation}
    s(x) > s(y) \implies s'(x) > s'(y).
\end{equation}
This is useful for asserting equivalence for ranking tasks where the relative ordering matters, but the exact scores do not. 
\end{definition}

Next, we define the structure of a Boolean document and a positive CNF query.
\begin{definition}[Boolean Document and Positive CNF Query]
Let a document $D \subset \N$ be a set of unique keys. Let $K = T_1 \wedge T_2 \wedge \dots \wedge T_h$ be a logical query in Conjunctive Normal Form (CNF), where each clause $T_j = t_{j,1} \vee t_{j,2} \vee \dots \vee t_{j,l_j}$ is a disjunction (OR) of positive keys. 

The strict Boolean scoring function $s_K: 2^\N \to \{0, 1\}$ is defined as $s_K(D) = 1$ if $D$ satisfies $K$ (i.e., $D$ contains at least one key from every clause $T_j$), and $s_K(D) = 0$ otherwise.
\end{definition}

We now present the main result of this subsection: MaxSim's ability to evaluate positive CNF queries.
\begin{theorem}[MaxSim Evaluates Positive CNF]
\label{thm:soft_logic}
Let $K$ be a positive CNF query with $h$ clauses. There exists a mapping from $K$ to a query set $\cQ \subset \R^{M}$, and a mapping from any document $D$ to a document set $\cD \subset \R^{M}$, such that the MaxSim similarity $S(\cQ, \cD)$ is rank equivalent to the strict Boolean scoring function $s_K(D)$, where $|\cQ| = h$ and $|\cD| = |D| + 1$. Additionally, the sets $\cQ$ and $\cD$ can be constructed independently; the mapping function applied to $K$ requires no knowledge of $D$, and the mapping applied to $D$ requires no knowledge of $K$.
\end{theorem}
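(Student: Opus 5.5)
The plan is to reduce the statement to Theorem \ref{thm:grouped_queries} by choosing binary weights. We encode each clause $T_j = t_{j,1}\vee\dots\vee t_{j,l_j}$ as the query group
\[
S_j = \{(t_{j,1},1),\dots,(t_{j,l_j},1)\}.
\]
Any duplicate keys within a clause are merged, since the definition of a query group requires unique keys. We encode the document $D$ as the document set $\{(d,1)\mid d\in D\}$. Set $M = 2\max_j |S_j| + 1$. Theorem \ref{thm:grouped_queries} then supplies $\cQ=\{\vc_{S_1},\dots,\vc_{S_h}\}\subset\R^M$ and $\cD=\{\phi_M(d)\mid d\in D\}\cup\{\vz\}$. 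These have $|\cQ|=h$ and $|\cD|=|D|+1$, and they are constructed independently because $\cQ$ depends only on $K$ and $\cD$ only on $D$. One subtlety is that $M$ depends on the query. To keep the document map query-independent, we fix $M$ in advance as an upper bound on clause size (or use the full monomial map truncated at $M$). Lemma \ref{lemma:inner_max} already allows $M \ge 2|S|+1$ with zero padding, so this costs nothing.

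Next we compute the score. By Theorem \ref{thm:grouped_queries},
\[
S(\cQ,\cD)=\sum_{j=1}^h \text{MaxOR}(S_j,D).
\]
All weights equal $1$, so every matching product $w_iw_{q,i}$ equals $1$. Hence $\text{MaxOR}(S_j,D)=1$ if $D$ contains some key of $T_j$, and $0$ otherwise. The score therefore equals the number of clauses of $K$ that $D$ satisfies, an integer in $\{0,\dots,h\}$.

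Rank equivalence follows directly. The implication $s_K(D)>s_K(D')$ holds only when $s_K(D)=1$ and $s_K(D')=0$. In that case $D$ satisfies all $h$ clauses, so $S(\cQ,\cD)=h$, while $D'$ fails at least one clause, so $S(\cQ,\cD')\le h-1<h$.

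The main obstacle is bookkeeping rather than anything conceptual. We must ensure that the interpolating polynomial from Lemma \ref{lemma:multi_key_poly} with all-ones weights behaves as Lemma \ref{lemma:inner_max} requires. Here $x_S\equiv 1$ and $C>1$ suffices. We must also handle the degenerate case $h=0$, where $s_K$ is constant and the implication holds vacuously.
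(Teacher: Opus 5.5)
Your proposal is correct and follows essentially the same route as the paper: unit-weight query groups and a unit-weight document set are fed into Theorem \ref{thm:grouped_queries}, the score then counts the clauses $D$ satisfies, and rank equivalence follows from comparing $h$ against at most $h-1$. Your extra remarks are sound refinements that the paper leaves implicit: fixing $M$ so the document map stays query-independent, noting $x_S\equiv 1$ with $C>1$, merging duplicate keys, and the vacuous case $h=0$.
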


\begin{proof}
We construct $\cD$ by assigning a unit weight to each key in the document: $D_{set} = \{(d_i, 1) \mid d_i \in D\}$. We construct $\cQ$ by creating a query group for each clause $T_j$, assigning unit weights to its keys: $S_j = \{(t, 1) \mid t \in T_j\}$. We then generate $\cD$ and $\cQ$ in $\R^M$ as defined in Theorem \ref{thm:grouped_queries}.

By Theorem \ref{thm:grouped_queries}, the MaxSim similarity is:
\begin{align}
    S(\cQ, \cD) &= \sum_{S_j \in Q} \text{MaxOR}(S_j, D_{set}) \\
    &= \sum_{S_j \in Q} \max \left( \{0\} \cup \{ w_i w_{j,i} \mid (d_i, w_i) \in D_{set} \text{ and } (d_i, w_{j,i}) \in S_j \} \right).
\end{align}

Because all weights are exactly $1$, the product $w_i w_{j,i} = 1$ for all matching keys. The condition that a key exists in both $D_{set}$ and $S_j$ is equivalent to stating $D \cap T_j \neq \emptyset$.  

Thus, for a single clause $T_j$, the Weighted Max-OR evaluates to $1$ if $D \cap T_j \neq \emptyset$ (the clause is satisfied), and $0$ if $D \cap T_j = \emptyset$ (the clause is unsatisfied). The total similarity $S(\cQ, \cD)$ is exactly the number of clauses in $K$ satisfied by $D$.

To prove rank equivalence, let $D_a$ and $D_b$ be two documents such that $s_K(D_a) > s_K(D_b)$. Because $s_K$ outputs values in $\{0, 1\}$, this implies $s_K(D_a) = 1$ (satisfies all $h$ clauses) and $s_K(D_b) = 0$ (fails at least one clause). 
Consequently, $S(\cQ, \cD_a) = h$, and $S(\cQ, \cD_b) \le h - 1$. Therefore, $S(\cQ, \cD_a) > S(\cQ, \cD_b)$, satisfying the definition of rank equivalence.
\end{proof}

\section{Experiments}
\label{sec:experiments}
In Section \ref{sec:late_interaction_for_exact_inner_product}, we introduce a new late-interaction similarity function, Signed MaxSim $S_{\pm}$, which has theoretical capacities beyond standard MaxSim. In this section, we investigate whether these capacities translate to empirical gains. To better quantify the differences between these methods, we focus on a retrieval task that is more likely to benefit from the ability to replicate any real-valued inner product. Specifically, we focus on retrieval tasks that feature negations or exclusions. We refer to the model trained with the Signed MaxSim similarity $S_{\pm}$ as \textbf{Fallon} throughout our experiments.

\subsection{Datasets}
\label{sec:datasets}

To isolate the architectural impact on retrieval performance and enable better introspection, we use synthetic retrieval tasks for both training and evaluation. Inspired by the LIMIT dataset \citep{theoretical_limitations_embedding_retrieval}, documents are in the format \textit{John Smith likes Alternative Rock, Angel Food Cakes, Apricots, Argentine Ants, Avocados, Blackberries, Cabbages, and Cantaloupes} while queries ask for a person with a specific set of attributes \textit{People who like Alternative Rock and Blackberries but do not like Cherries}. Unlike the original LIMIT dataset, which only has single-attribute queries, our queries can include multiple features required for relevance and additionally can have attributes which should not be present, e.g. \textit{but do not like Cherries}. By using synthetic data we can remove possible confounding factors like validation-test mismatch and limited training data, while evaluating the abilities we believe are important differentiators between standard MaxSim and Signed MaxSim. Additionally, it makes it easy to construct different variants to test robustness and simulate out-of-domain generalization.

\subsubsection{Training Data}

For training, we generate 100k queries with 200k documents using the original LIMIT attributes and method for name generation. During training, we used contrastive training with in-batch negatives, so we constructed queries to minimize the total number of positive documents to limit the risk of in-batch negatives being true positives. Each query can have one to four inclusion terms (e.g. non-negated terms) and all queries have one negated term which should not be in relevant documents. Each training query is bundled with 32 hard negatives: documents that are relevant except for the fact that they include the negated term. Additional random negatives were included if there were fewer than 32 hard negatives.

\subsubsection{Evaluation Benchmarks}

For evaluation we investigate three different datasets:
\begin{description}[leftmargin=8pt, labelindent=0pt]
    \item[In-Domain] This dataset uses the exact same setup as the training data in terms of query formatting and vocabulary. All queries in this dataset are distinct from those seen during training, but the documents may be the same, although the documents are generated separately. 
    \item[Different Vocabulary] This dataset uses the same setup as the in-domain data, except the vocabulary is different. We generate additional vocabulary terms using Gemini conditioned on the original vocabulary. This dataset provides insight into the generalization of the model to new features. 
    \item[Negation Only] This dataset uses the same vocabulary as the in-domain data, but changes the structure of queries to contain only negations; for example, \textit{People who do not like Cherries}. We also allow one or two negations, such as \textit{People who do not like Cherries or Watermelon}, 50\% of queries have a single negation. This dataset tests the models' generalization to new query formats.
\end{description}

Each benchmark contains 2,000 queries evaluated against a corpus of 100k documents. Evaluation documents are generated independently from training documents, though incidental overlap is possible.

\subsection{Training Setup}
\label{sec:training_setup}

We train both models using a contrastive loss with in-batch negatives and a ModernBERT backbone. The main differences are the scoring function (MaxSim vs. Signed MaxSim) and the additional weight (i.e. sign) generation mechanism for Fallon. Below, we describe the loss function, model architecture, and optimization details.

All models are trained with a contrastive cross-entropy loss using in-batch negatives combined with hard negatives. Let $\mathbf{e}^q$ denote the query representation, $\mathbf{e}^{d^+}$ the positive document representation, and $\mathcal{N}$ the set of negatives. The loss uses a jointly learned temperature $\tau$:
$$
\mathcal{L}_{\text{contrastive}} = - \log \frac{\exp(S(\mathbf{e}^q, \mathbf{e}^{d^+}) / \tau)}{\exp(S(\mathbf{e}^q, \mathbf{e}^{d^+}) / \tau) + \sum_{d^- \in \mathcal{N}} \exp(S(\mathbf{e}^q, \mathbf{e}^{d^-}) / \tau)}.
$$

Both models use ModernBERT \citep{modernbert} as the backbone, which has a hidden dimension of $d = 768$ and 149\,M parameters. We prepend every query with the prefix \texttt{"query:~"} and truncate both queries and passages to a maximum of 196 tokens. No query-expansion tokens are added following contemporary late-interaction designs \citep{colbert_zero}. Before scoring, each token's hidden state is projected from $d = 768$ down to $m = 128$ by a five-layer MLP block. For the real-valued model, we use another five-layer MLP to produce the final per-token weights. The embeddings of both models are $\ell_2$-normalized after projection following \citep{colbert, colbert_v2}. We tried using non-normalized embeddings for Fallon, but found it performed worse. We believe this is due to the unbounded document scores resulting in degraded training signal when using contrastive loss. We encourage future work to investigate alternative methods to normalize training.

The baseline ColBERT model otherwise follows the design of \citet{colbert_v2} and encodes queries and documents with a single shared encoder. Relevance is measured by $\mathrm{MaxSim}$ (Definition \ref{def:maxsim}), and the contrastive loss temperature~$\tau$ is learned jointly with all other parameters, starting from an initial value of $\tau_0 = 1$.

To implement the Fallon model (i.e. our model trained with Signed MaxSim), there are minimal changes needed when compared with a standard ColBERT model; the main change is the additional MLP used for weight generation. In the theoretical version, we use the sign to produce the weight, which would limit the weights to $1$ or $-1$, presenting a problem for gradient descent. To mitigate this, we allow the weights to be any real value. This allows for gradients to flow without limitations. We tried using Tanh activation as well, but found that with $\ell_2$-normalized embeddings training was unstable, thus we stayed with no activation.

The MLP blocks map a token hidden state $\mathbf{h} \in \mathbb{R}^{768}$ to an output $\mathbf{e} \in \mathbb{R}^{d_{\mathrm{out}}}$, where $d_{\mathrm{out}}$ is either 128 (for embedding projections) or 1 (for the weight generator). The complete implementation is shown below:
\begin{align*}
    \mathbf{z}_1 &=
        \mathrm{LN}\!\bigl(
            \mathrm{GELU}(W_1\,\mathbf{h} + \mathbf{b}_1)
        \bigr),
    \\[4pt]
    \mathbf{z}_l &=
        \mathbf{z}_{l-1} \;+\;
        \mathrm{LN}\!\bigl(
            \mathrm{GELU}(W_l\,\mathbf{z}_{l-1} + \mathbf{b}_l)
        \bigr),
    \qquad l = 2,3,4,
    \\[4pt]
    \mathbf{e} &= W_5\,\mathbf{z}_4 + \mathbf{b}_5.
\end{align*}
Here $W_1,\ldots,W_4 \in \mathbb{R}^{768 \times 768}$ and $W_5 \in \mathbb{R}^{d_{\mathrm{out}} \times 768}$, with $\mathrm{LN}$ denoting layer normalization and $\mathrm{GELU}$ representing the GELU activation function \citep{gelu}.

All models are implemented using Transformers \citep{huggingface_transformers} and PyTorch \citep{pytorch}. We use the AdamW optimizer \citep{adamw} with default hyperparameters and no weight decay. All experiments use BF16 mixed-precision training with TF32 tensor operations. The learning rates were selected based on prior experiments and an LR sweep for ColBERT. We swept learning rates over $\{5 \times 10^{-6},\, 1 \times 10^{-5},\, 2 \times 10^{-5}\}$ and selected $1 \times 10^{-5}$ based on validation nDCG@10. The complete training hyperparameter configurations are available in Table \ref{tab:training_config}. During training, we evaluate each model on a validation dataset every 2,500 steps and select the final model based on validation nDCG@10. We stopped training after the validation metric showed no improvement for three consecutive checkpoints. The validation dataset uses the same setup as the In-Domain evaluation benchmark but with 1,000 queries and 40k documents. We ensure the queries are disjoint from both training and evaluation data.

\begin{center}
\small
\captionof{table}{Training hyperparameters per model. \emph{Eff.\ Batch} denotes effective batch size (per-device batch $\times$ gradient accumulation steps). \emph{Hard Neg.}\ is the number of hard negatives per positive sampled in a training batch. \emph{Attn.\ Drop} and \emph{MLP Drop} are the dropout ratios set for the backbone ModernBERT model.}
\label{tab:training_config}
\resizebox{\textwidth}{!}{
\begin{tabular}{@{}lcccccccc@{}}
\toprule
Model & LR & Eff.\ Batch & Total Steps & Hard Neg. & Warm-up & LR Decay & Attn.\ Drop & MLP Drop \\
\midrule
ColBERT & $1\times10^{-5}$ & 128 & 223,640 & 7 & 5\% & Linear & 0.1 & 0.1 \\
Fallon  & $8\times10^{-5}$ & 128 & 223,640 & 7 & 5\% & Linear & 0.1 & 0.1 \\
\bottomrule
\end{tabular}
}
\end{center}

\subsection{Evaluation Procedure}
\label{sec:evaluation}
We perform retrieval using exhaustive search: for each query, we compute the exact similarity scores for all documents in the corpus and return the top 1,000 results. We use an exact approach over an approximate method, as our new late-interaction method is not directly compatible with standard approximate methods such as PLAID \citep{plaid_late_interaction}. It is worth noting that our method does not preclude an approximate search technique, but developing one was outside the scope of this work.

We evaluate using three evaluation metrics. We use nDCG@10 \citep{ndcg} and P@10 to understand the precision of the approaches and how directly the results could be used in real-world applications where users may be unwilling to go beyond the first set of documents. To get a more holistic view of the retrieval coverage we include Average Precision (AP) which considers both the recall and precision of the retrieved documents. As we retrieve only the top 1,000 documents from a corpus of 100k, AP may have a low upper bound for queries with numerous relevant documents.

\section{Results and Discussion}
\label{sec:results}
The results of the three test datasets for the baseline ColBERT model and our Fallon model are shown in Table \ref{tab:fallon_generalization_results}. The In-Domain results show that the standard MaxSim and ColBERT approach is capable of learning the synthetic task even though it does not naturally handle negative values, which are a natural way to handle negations. Although ColBERT does well, Fallon, which uses Signed MaxSim, is still significantly better across all the measured metrics. 

Moving to the out-of-domain data, we see a much larger gap emerge. For the Different Vocabulary dataset, Fallon actually sees an improvement over the In-Domain dataset (likely due to changes in the number of positives per query), while ColBERT shows a substantial reduction. This result demonstrates that ColBERT's In-Domain performance relies heavily on the shared features between the training and evaluation datasets to work correctly. Fallon, which uses Signed MaxSim, on the other hand, learns a far more general strategy which can directly penalize the presence of unwanted features.

The Negation Only results are the most revealing and provide the strongest evidence for the limitations of standard MaxSim. For Negation Only queries, the majority of the 100k corpus documents are relevant (those lacking the negated attribute), so even random retrieval would do relatively well. Thus, ColBERT's near-zero performance in Table \ref{tab:fallon_generalization_results} indicates it is actively placing documents that contain the negated attribute at the top of the ranking. 

This is a natural consequence of MaxSim. It is very difficult to have a low score for a query embedding, as this requires all document embeddings to have a low score, even those completely unrelated. Because during training ColBERT learns to always match some relevant feature, it struggles when there are no relevant aspects to match. 

Fallon resolves this failure, in large part because Signed MaxSim allows the model to easily penalize features by supporting negative values natively. The relatively low AP, while improved dramatically over ColBERT, is largely an artifact of evaluating only the top 1,000 retrieved documents, since for the Negation Only queries where tens of thousands of documents are relevant, any relevant document outside the top 1,000 contributes zero precision regardless of model quality. The remaining gap from perfect performance suggests that there may still be limitations in generalization, though these limits are far less damaging than the limitations demonstrated by standard MaxSim.

Together, these results across the three evaluation settings demonstrate that Signed MaxSim addresses a fundamental limitation of standard MaxSim: the inability to encode semantic negations and complex relational patterns that require suppression of certain token matches. While ColBERT performs reasonably when training and evaluation distributions are closely aligned, its performance degrades dramatically when the data distribution changes. By leveraging more powerful representations and Signed MaxSim similarity, Fallon is able to generalize far better, making it the clear choice in challenging retrieval scenarios.

\begin{table*}[!t]
\centering
\caption{Performance comparison between the ColBERT baseline and Fallon across in-domain, different-vocabulary, and negation-only evaluation settings. Here, $\dagger$ represents a statistically significant difference between the baseline ColBERT and our model. Statistical significance was calculated using a paired Student’s t-test; we consider any difference with $p < 0.01$ as significant.}
\label{tab:fallon_generalization_results}
\small
\setlength{\tabcolsep}{5.5pt}
\renewcommand{\arraystretch}{1.15}
\resizebox{\textwidth}{!}{
\begin{tabular}{lccc ccc ccc}
\toprule
\multirow{2}{*}{\textbf{Model}}
& \multicolumn{3}{c}{\textbf{In-Domain}}
& \multicolumn{3}{c}{\textbf{Different Vocab.}}
& \multicolumn{3}{c}{\textbf{Negation Only}} \\
\cmidrule(lr){2-4}
\cmidrule(lr){5-7}
\cmidrule(lr){8-10}
& \textbf{nDCG@10} & \textbf{P@10} & \textbf{AP}
& \textbf{nDCG@10} & \textbf{P@10} & \textbf{AP}
& \textbf{nDCG@10} & \textbf{P@10} & \textbf{AP} \\
\midrule
ColBERT
& 0.982 & 0.875 & 0.943
& 0.597 & 0.587 & 0.511
& 0.008 & 0.009 & 0.002 \\

\rowcolor{gray!10}
\textbf{Fallon}
& $\mathbf{0.997^\dagger}$ & $\mathbf{0.893^\dagger}$ & $\mathbf{0.997^\dagger}$
& $\mathbf{1.000^\dagger}$ & $\mathbf{0.955^\dagger}$ & $\mathbf{1.000^\dagger}$
& $\mathbf{0.788^\dagger}$ & $\mathbf{0.789^\dagger}$ & $\mathbf{0.039^\dagger}$ \\
\bottomrule
\end{tabular}
}
\end{table*}

\section{Conclusion}
In this work, we established that MaxSim similarity is at least as expressive as inner-product-based retrieval for non-negative vectors, allowing late-interaction models to exactly replicate infinite-dimensional sparse representations using only $O(k)$ space. We proved that this exact replication is impossible for standard finite-dimensional single-vector models, formally separating the capacity of late-interaction models from single-vector retrievers. However, we also proved that standard MaxSim is incapable of exactly replicating the inner product between real-valued vectors, meaning MaxSim is not universally able to replicate inner product similarity. To address this shortcoming, we leverage our theoretical framework to extend standard MaxSim to enable the exact replication of the inner product between any real-valued vectors. This makes our extension, Signed MaxSim, provably as capable as any single-vector inner-product-based retrieval method. Our experimental results confirm the usefulness of Signed MaxSim and highlight that it is substantially more robust than standard MaxSim when evaluated out-of-domain. Furthermore, we showed that MaxSim naturally evaluates positive Conjunctive Normal Form (CNF) expressions by acting as an aggregation of fuzzy ORs. This finding provides another perspective to explain the empirical improvement seen in late-interaction models and connects late-interaction methods to the traditional IR methods which leverage structured Boolean queries.

Our theoretical results raise several open questions regarding the limits of retrieval similarities. While we show that standard MaxSim cannot replicate inner products between real-valued vectors when the number of embeddings is tied to the sparsity of the original vector, we do not investigate the more relaxed case where the number of embeddings is decoupled from sparsity. Whether such a relaxation would enable exact real-value replication remains an open question. In this work, we show that MaxSim can exactly replicate the inner product between sparse infinite-dimension vectors and that a finite-dimension standard inner product cannot replicate this; however, it is an open question whether all similarities expressible by MaxSim could be exactly expressed or closely approximated by a sparse infinite-dimension inner product. We show that MaxSim can act as a rank-equivalent positive CNF evaluator. Whether it is possible for standard MaxSim to act as a rank-equivalent evaluator of all CNFs remains an open question.

Finally, our findings suggest new directions for neural retrieval architectures. Although our polynomial embedding constructions are primarily theoretical tools, they demonstrate that exact logical evaluation is geometrically possible within late-interaction spaces. Future empirical work could investigate whether trained models implicitly learn these localized structures, or if explicitly regularizing embeddings toward these constructions improves out-of-domain generalization.

\begin{ack} 
The authors would like to thank Vignesh Viswanathan for providing feedback on an early version of this work and Antoine Chaffin for answering questions about ColBERT training.

This work was supported in part by the Center for Intelligent Information Retrieval, in part by the NSF Graduate Research Fellowship Program Award \#1938059, and in part by the Office of Naval Research contract number N000142412612. Any opinions, findings, conclusions or recommendations expressed in this material are those of the authors and do not necessarily reflect those of the sponsor.

The authors used AI technology in various capacities while producing this work, including drafting proofs and other text, checking writing, generating code for experiments, and producing figures and tables. All AI outputs were checked by the authors for correctness, and the authors take full responsibility for the content of this paper. 

\end{ack}

{\small
\bibliographystyle{abbrvnat}
\bibliography{references}
}

\appendix

\end{document}